\documentclass[11pt]{article}
\usepackage[utf8]{inputenc}
\usepackage[T1]{fontenc}
\usepackage{lmodern}
\usepackage[margin=1in]{geometry}
\usepackage{amsmath,amssymb,amsthm}
\usepackage{graphicx}
\usepackage{booktabs}
\usepackage{array}
\usepackage{xcolor}
\usepackage{microtype}
\usepackage{caption}
\usepackage{subcaption}
\usepackage{float}
\usepackage{flafter}
\usepackage{placeins}
\usepackage[hidelinks]{hyperref}
\usepackage{url}
\newtheorem{lemma}{Lemma}
\newtheorem{proposition}{Proposition}

\providecommand{\RDevice}{Apple M4 Pro}
\providecommand{\RMaxRelErr}{0.00}
\providecommand{\RConvSlope}{-0.50}
\providecommand{\RThroughput}{0.0}
\providecommand{\RCpuGpuAgree}{0.0}
\providecommand{\RErrQuarter}{0}
\providecommand{\RErrHalf}{0}
\providecommand{\RErrOne}{0.0}

\providecommand{\RTauOne}{0.0}
\providecommand{\RTauTenLo}{0.0}
\providecommand{\RTauTenHi}{0.0}
\providecommand{\RTauOneLo}{0.0}
\providecommand{\RTauOneHi}{0.0}
\providecommand{\RTauThin}{0.00}
\providecommand{\RBruteThinDrop}{0}
\providecommand{\RDipOverThin}{0}
\providecommand{\RThickAgree}{0.0}
\providecommand{\RRbruteThick}{0.000}
\providecommand{\RRdipole}{0.000}
\providecommand{\RReflAgree}{0}
\providecommand{\RReflRateBrute}{0.00}
\providecommand{\RReflRateClosed}{0.00}
\providecommand{\RTrAgree}{0}
\providecommand{\RTrRateBrute}{0.00}
\providecommand{\ROptSech}{0}
\providecommand{\ROptCorr}{0.00}
\providecommand{\ROptCorrGeom}{0.00}
\providecommand{\RRimOverGlow}{0.00}
\providecommand{\RCenterRatio}{0.00}
\providecommand{\RRimOverGlowPct}{0}
\providecommand{\RHeroRes}{0}
\providecommand{\RHeroSec}{0.00}
\providecommand{\RHeroWalks}{0}
\providecommand{\RNMedia}{0}
\providecommand{\RHetErr}{0.00}
\providecommand{\RHetCRatio}{2}
\providecommand{\RShadowDrop}{0}
\providecommand{\RSingleNear}{0}
\providecommand{\RSingleTotal}{0}
\providecommand{\RSingleFar}{0.0}
\providecommand{\RSingleCorr}{0.000}
\providecommand{\RInvCentreErr}{0.0}
\providecommand{\RInvContrastErr}{0}
\providecommand{\RInvSrc}{2}
\providecommand{\RInvDet}{0}
\providecommand{\RInvContrastTrue}{0}
\providecommand{\RInvDecay}{0.0}
\providecommand{\RInvResTwo}{0.0}
\providecommand{\RInvResFive}{0.0}
\providecommand{\RCertRho}{0.00}
\providecommand{\RCertDipErr}{0}
\providecommand{\RCertDipWC}{0}
\providecommand{\RCertCorrErr}{0}
\providecommand{\RCertCorrWC}{0}
\providecommand{\RCertOpFrac}{0}
\providecommand{\RCertOpErr}{0}
\providecommand{\RCertOpTol}{0}
\providecommand{\RCertStressShapes}{0}
\providecommand{\RCertStressRho}{0.00}
\providecommand{\RCertStressDipErr}{0}
\providecommand{\RCertStressCorrErr}{0}
\providecommand{\RGtCases}{1}
\providecommand{\RGtErrOurs}{0}
\providecommand{\RGtErrDip}{0}
\providecommand{\RGtErrCorr}{0}
\providecommand{\RGtMeanOurs}{0}
\providecommand{\RGtMeanDip}{0}
\providecommand{\RGtMeanCorr}{0}
\providecommand{\RGtToneOurs}{0.000}
\providecommand{\RGtToneDip}{0.000}
\providecommand{\RGtToneCorr}{0.000}
\providecommand{\RGtToneHybrid}{0.000}

\providecommand{\RGtWinOurs}{0}
\providecommand{\RGtWinDip}{0}
\providecommand{\RGtWinCorr}{0}

\providecommand{\RGtReduce}{0}
\providecommand{\RGtThickX}{0}
\providecommand{\RGtSpp}{0}
\providecommand{\RGtBounce}{0}
\providecommand{\RGtPtSec}{0}
\providecommand{\RGtOursSec}{0}

\providecommand{\RGswShapes}{0}
\providecommand{\RGswFitTau}{0.00}
\providecommand{\RGswFitCurv}{0.00}
\providecommand{\RGswCoefCurv}{0.0}
\providecommand{\RHstNoiseMin}{0}
\providecommand{\RHstNoiseMax}{0}
\providecommand{\RHstRatioMax}{0}
\providecommand{\RHstExpG}{0.0}
\providecommand{\RHstSdfExp}{0.0}
\IfFileExists{numbers.tex}{\renewcommand{\RDevice}{Apple M4 Pro}
\renewcommand{\RMaxRelErr}{0.75}
\renewcommand{\RConvSlope}{-0.49}
\renewcommand{\RThroughput}{4.1}
\renewcommand{\RCpuGpuAgree}{0.5}
\renewcommand{\RErrQuarter}{32}
\renewcommand{\RErrHalf}{14}
\renewcommand{\RErrOne}{3.9}

\renewcommand{\RTauOne}{1.8}
\renewcommand{\RTauTenLo}{0.7}
\renewcommand{\RTauTenHi}{1.0}
\renewcommand{\RTauOneLo}{1.8}
\renewcommand{\RTauOneHi}{2.1}
\renewcommand{\RTauThin}{0.25}
\renewcommand{\RBruteThinDrop}{49}
\renewcommand{\RDipOverThin}{95}
\renewcommand{\RThickAgree}{0.8}
\renewcommand{\RRbruteThick}{0.441}
\renewcommand{\RRdipole}{0.437}
\renewcommand{\RReflAgree}{26}
\renewcommand{\RReflRateBrute}{1.99}
\renewcommand{\RReflRateClosed}{2.01}
\renewcommand{\RTrAgree}{10}
\renewcommand{\RTrRateBrute}{0.99}
\renewcommand{\ROptSech}{5\times10^{-5}}
\renewcommand{\ROptCorr}{0.99}
\renewcommand{\ROptCorrGeom}{0.85}
\renewcommand{\RRimOverGlow}{1.55}
\renewcommand{\RCenterRatio}{1.04}
\renewcommand{\RRimOverGlowPct}{55}
\renewcommand{\RSingleNear}{24}
\renewcommand{\RSingleTotal}{18}
\renewcommand{\RSingleFar}{0.0}
\renewcommand{\RSingleCorr}{0.999}
\renewcommand{\RInvCentreErr}{5.4}
\renewcommand{\RInvContrastErr}{1}
\renewcommand{\RInvSrc}{2}
\renewcommand{\RInvDet}{37}
\renewcommand{\RInvContrastTrue}{8}
\renewcommand{\RInvDecay}{1.2}
\renewcommand{\RInvResTwo}{1.8}
\renewcommand{\RInvResFive}{1.0}
\renewcommand{\RCertRho}{0.77}
\renewcommand{\RCertDipErr}{35}
\renewcommand{\RCertDipWC}{105}
\renewcommand{\RCertCorrErr}{20}
\renewcommand{\RCertCorrWC}{49}
\renewcommand{\RCertOpFrac}{32}
\renewcommand{\RCertOpErr}{10}
\renewcommand{\RCertOpTol}{10}
\renewcommand{\RCertStressShapes}{6}
\renewcommand{\RCertStressRho}{0.14}
\renewcommand{\RCertStressDipErr}{47}
\renewcommand{\RCertStressCorrErr}{43}
\renewcommand{\RGtCases}{4}
\renewcommand{\RGtErrOurs}{45}
\renewcommand{\RGtErrDip}{72}
\renewcommand{\RGtErrCorr}{65}
\renewcommand{\RGtMeanOurs}{53}
\renewcommand{\RGtMeanDip}{56}
\renewcommand{\RGtMeanCorr}{53}
\renewcommand{\RGtToneOurs}{0.143}
\renewcommand{\RGtToneDip}{0.180}
\renewcommand{\RGtToneCorr}{0.168}
\renewcommand{\RGtToneHybrid}{0.132}

\renewcommand{\RGtWinOurs}{1}
\renewcommand{\RGtWinDip}{1}
\renewcommand{\RGtWinCorr}{2}

\renewcommand{\RGtReduce}{38}
\renewcommand{\RGtThickX}{1.6}
\renewcommand{\RGtSpp}{6}
\renewcommand{\RGtBounce}{64}
\renewcommand{\RGtPtSec}{2.2}
\renewcommand{\RGtOursSec}{0.8}

\renewcommand{\RGswShapes}{6}
\renewcommand{\RGswFitTau}{0.38}
\renewcommand{\RGswFitCurv}{0.94}
\renewcommand{\RGswCoefCurv}{+1.3}
\renewcommand{\RHstNoiseMin}{5}
\renewcommand{\RHstNoiseMax}{26}
\renewcommand{\RHstRatioMax}{32}
\renewcommand{\RHstExpG}{0.6}
\renewcommand{\RHstSdfExp}{2.1}
\renewcommand{\RHetErr}{0.35}
\renewcommand{\RHetCRatio}{2}
\renewcommand{\RShadowDrop}{47}
\renewcommand{\RHeroRes}{720}
\renewcommand{\RHeroSec}{0.63}
\renewcommand{\RHeroWalks}{512}
\renewcommand{\RNMedia}{4}
}{}

\newcommand{\sa}{\sigma_a}
\newcommand{\sgs}{\sigma_s}
\newcommand{\stp}{\sigma_t'}
\newcommand{\ssp}{\sigma_s'}
\newcommand{\ld}{\ell_d}
\newcommand{\Tr}{\mathrm{Tr}}
\newcommand{\E}{\mathbb{E}}
\newcolumntype{L}[1]{>{\raggedright\arraybackslash}p{#1}}

\title{\vspace{-1.5em}Dipole Diffusion Error in Thin Geometry:\\
Optical Thickness Laws for Grid-Free Subsurface Scattering}
\author{Faruk Alpay\thanks{Corresponding author: \texttt{alpay@lightcap.ai}} \quad Barış Başaran\\[3pt]
  \small Department of Computer Engineering, Bahçeşehir University, Istanbul, Turkey\\[-1pt]
  \small \texttt{\{faruk.alpay, baris.basaran\}@bahcesehir.edu.tr}}
\date{}

\begin{document}
\maketitle

\begin{abstract}
\noindent
The dipole and its descendants model subsurface scattering with a radial
reflectance profile fitted to a flat, semi-infinite slab. This assumption
introduces a systematic geometry error on thin and curved objects. We isolate the
effect by comparing the dipole with the finite-slab multipole under the same
diffusion model and boundary condition. In slab geometry the diffuse-albedo error
has a material-independent leading rate, $C e^{-2\tau}$ with $\tau=T/\ld$, while
the prefactor remains material dependent; the same image series gives the
transmitted flux, whose leading decay is $e^{-\tau}$. We give the closed-form
albedo and transmittance, relate the exponents to killed random walks, and extend
the interpretation to spatially varying media through optical distance. A
brute-force volumetric path tracer fits a reflectance-deficit rate of
$\RReflRateBrute$ and a transmittance rate of $\RTrRateBrute$, matching the
round-trip and single-pass predictions. The resulting thickness predictor is a
useful thin-feature heuristic, but stress tests show that curvature and
illumination can dominate away from the slab setting. For the remaining
geometry-dependent term we solve the screened-Poisson diffusion problem directly
inside the signed-distance domain with Walk on Spheres, without an interior mesh
or a tangent half-space approximation; the estimator matches closed-form tests to
$\RMaxRelErr\%$. Against a four-case path-traced benchmark it improves the
back-lit, thickness-governed case but not every front-lit or curved case, showing that the method reduces geometry error within diffusion and does not
replace radiative transport.
\end{abstract}

\section{Introduction}
The appearance of a translucent object is produced by light that enters the
surface, scatters repeatedly within the medium, and exits at a different point. Simulating this
transport with brute-force Monte Carlo is correct but slow, so real-time and
production rendering instead use the \emph{diffusion approximation}: in a
strongly scattering medium the multiply-scattered field is well described by a
diffusion equation, and Jensen et al.~\cite{jensen2001} solved that equation in
closed form for a point source on a semi-infinite slab. The result, the
\emph{dipole}, gives a radial diffuse-reflectance profile $R_d(r)$ that has
become a standard component of subsurface rendering. Its refinements, the
multipole~\cite{donner2005}, quantized diffusion~\cite{deon2011}, photon-beam
diffusion~\cite{habel2013}, the directional dipole~\cite{frisvad2014}, the
normalized-diffusion profiles used in film~\cite{christensen2015,burley2015}, and
the separable kernels that put these in screen space for real-time
engines~\cite{jimenez2015separable}, all share the same underlying object. Each is a profile that
depends only on the surface distance $r$ between the points where light enters and
leaves.

\begin{figure}[!tbp]
\centering
\includegraphics[width=0.6\linewidth]{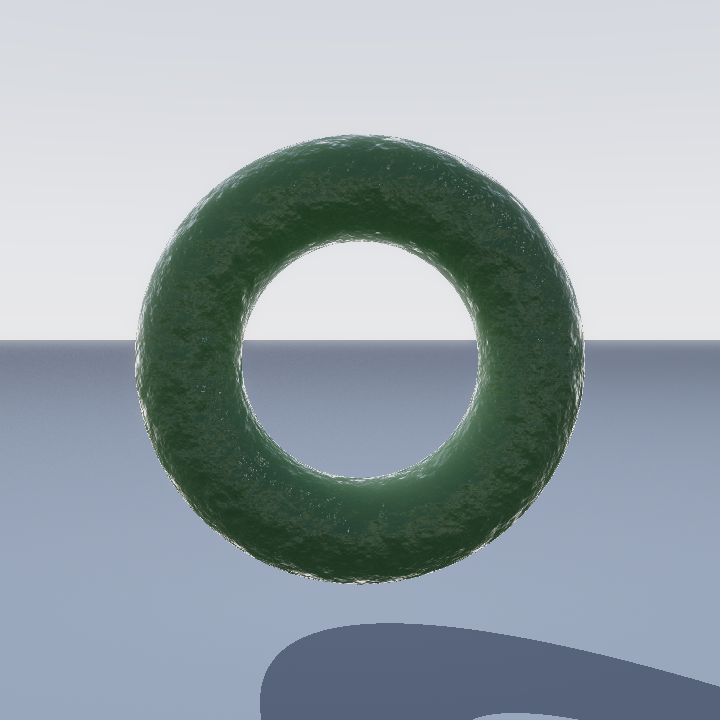}
\caption{A back-lit jade ring rendered with the grid-free diffusion estimator of
Section~\ref{sec:method} beneath a microfacet dielectric coat, under environment
lighting. The thin tube transmits light without any half-space assumption, because the
walk reaches the true second boundary through the distance field. Rendered at
$\RHeroRes^2$ with $\RHeroWalks$ walks per pixel in $\RHeroSec$\,s on an \RDevice.}
\label{fig:hero}
\end{figure}

A profile in $r$ alone encodes a strong geometric assumption. It is exact for a
flat half-space, an infinite reservoir of material below the surface. Real
assets violate it wherever they are thin or sharply curved. At the rim of a leaf
or the edge of a gem there is no half-space below the surface; there is a second
boundary a fraction of a diffusion length away, through which light escapes
instead of returning. The graphics community knows the dipole ``fails on thin
geometry'', and the multipole~\cite{donner2005} was introduced
to handle the planar thin-slab case; more recently a learned model
corrects the profile per shape from training data~\cite{vicini2019}. But the
error has not been characterized as a scalar function of geometry, and a general
geometry-aware diffusion estimator normally requires an interior mesh or grid.
Grid-free Monte Carlo solvers for elliptic boundary-value problems
~\cite{sawhney2020,sawhney2022,sawhney2023,miller2024,sawhney2025sota} remove
that discretization requirement, making it possible to evaluate the diffusion
operator directly on signed-distance geometry.

Our goal is to make this failure mode measurable and to connect that measurement
to a solver used only where a profile is insufficient. Real-time and production
renderers gather a profile in screen or texture space
~\cite{jimenez2015separable,burley2015} and often maintain a local thickness,
whether as a baked map or a screen-space estimate. We show that the same
thickness, measured as $\tau=T/\ld$, determines the leading exponential rate of
the profile error in the planar case. The closed-form ratio
$\kappa(\tau)=R(\tau\ld)/R_\infty$ gives a
per-pixel albedo correction for front-lit over-glow, while the companion
transmittance $\Tr(\tau)$ gives the back-lit flux. Spatially varying media,
curvature-dependent residuals and inverse coefficient recovery are then handled
by the grid-free diffusion estimator.

Four contributions follow.
\begin{itemize}
\item \textbf{A slab law, a predictor, and a thickness correction.} For a finite
slab, the dipole's albedo error has a material-independent exponential rate in
the optical thickness $\tau=T/\ld$, which we derive in closed form:
$C e^{-2\tau}$ with a material-dependent prefactor (Section~\ref{sec:disc},
Proposition~\ref{prop:law}). A brute-force path tracer directly fits the
reflectance-deficit rate as $\RReflRateBrute$, close to the predicted round-trip
rate $2$. The error is the light a thin feature loses to a second surface and the
dipole retains; this interpretation extends the rate from a slab to layered,
spatially varying media, where optical distance replaces straight-line thickness
(Section~\ref{sec:optical}). A thickness map supplies a per-pixel albedo
correction and a useful dispatch heuristic, but Section~\ref{sec:cert} and
Section~\ref{sec:disc2} make explicit where curvature and illumination limit it.
\item \textbf{A geometry-exact estimator.} We evaluate the diffusion equation
inside the true object, given only as a signed-distance function, by Walk on
Spheres (Section~\ref{sec:method}), which requires neither an interior mesh nor a half-space assumption and reproduces
the closed form to $\RMaxRelErr\%$. Against an independent path-traced
reference it moves substantially closer than the dipole, cutting the image error by
$\RGtReduce\%$ overall and $\RGtThickX\times$ in thick regions; the residual is
diffusion-model error rather than solver error (Section~\ref{sec:groundtruth}). The
implementation is portable, plain-float Metal with a CPU reference.
\item \textbf{A solver for the interior field.} Because the unknown is the field
inside the object, the estimator represents effects outside a radial profile. These include
spatially varying media (Section~\ref{sec:hetero}), the near-field
single scattering the diffusion omits, and a dielectric, environment-lit
appearance model that meets the distant-lighting setting of recent
neural~\cite{neupress2024} and path-traced~\cite{werner2024} work without
training or temporal reservoir sampling.
\item \textbf{Differentiable and therefore optimizable.} The forward map is strongly
smoothing, so general structure recovery is ill-posed; but because the solver is
differentiable, a low-dimensional hidden inclusion can be fit from diffuse
measurements, a grid-free form of diffuse optical tomography
(Section~\ref{sec:inverse}), located to $\RInvCentreErr\%$ of the object's radius,
with its depth-limited resolution made explicit.
\end{itemize}

The common representation is probabilistic. In a strongly scattering object the
diffuse field can be written as an expectation over killed random walks that
terminate by absorption or by reaching the boundary~\cite{muller1956}. The dipole
uses the walk law of a half-space; a finite or curved object changes the boundary
hitting probabilities. Evaluating those walks on the signed-distance domain gives
the forward renderer, and differentiating the same map gives the low-dimensional
inverse problem considered in Section~\ref{sec:inverse}.

\section{Background and notation}
\label{sec:bg}
We summarize the classical diffusion model to fix notation; a full derivation is
in Ishimaru~\cite{ishimaru1978} and Stam~\cite{stam1995}.

\paragraph{Diffusion approximation.}
A homogeneous medium has absorption $\sa$, scattering $\sgs$ and
Henyey--Greenstein anisotropy $g$. The similarity relation gives the reduced
scattering $\ssp=\sgs(1-g)$ and the reduced extinction $\stp=\sa+\ssp$. In the
diffusion regime ($\ssp\gg\sa$) the fluence $\phi$ obeys
\begin{equation}
  D\,\nabla^2\phi(\mathbf{x}) - \sa\,\phi(\mathbf{x}) + Q(\mathbf{x}) = 0,
  \qquad D=\frac{1}{3\stp},
  \label{eq:diffusion}
\end{equation}
where $Q$ is the volumetric source of first-scattered light. Dividing by $D$
casts \eqref{eq:diffusion} as a screened Poisson (modified Helmholtz) equation
\begin{equation}
  \nabla^2\phi - c\,\phi = -f,\qquad c=\frac{\sa}{D}=3\sa\stp,\quad f=\frac{Q}{D},
  \label{eq:screened}
\end{equation}
whose screening constant defines the \emph{diffusion length}
$\ld = 1/\sqrt{c} = 1/\sqrt{3\sa\stp}$, the natural length over which diffuse
light persists, and the unit in which we will measure thickness.

\paragraph{Boundary condition.}
At the surface, diffusion theory uses the extrapolated-boundary condition: the
fluence is taken to vanish on a virtual boundary offset outward by
$z_b = 2AD$, where $A=(1+F_{dr})/(1-F_{dr})$ and $F_{dr}$ is the diffuse Fresnel
reflectance for relative index $\eta$~\cite{jensen2001}. The exitant radiant
exitance is then $M=\phi/(2A)$ and, for the Lambertian diffuse term,
$L_o = (1-F_{dr})\,\phi/(2\pi A)$. We adopt exactly this boundary condition for
\emph{every} method we compare, so that differences come only from the geometry
of the bulk and not from the boundary treatment.

\paragraph{Dipole and multipole.}
For a point source at depth $z_r=1/\stp$ on a semi-infinite slab, satisfying the
extrapolated boundary with one mirror source gives the dipole profile
\begin{equation}
  R_d(r)=\frac{\alpha'}{4\pi}\!\left[
  z_r(\kappa d_r{+}1)\frac{e^{-\kappa d_r}}{d_r^3}
  + z_v(\kappa d_v{+}1)\frac{e^{-\kappa d_v}}{d_v^3}\right],
  \label{eq:dipole}
\end{equation}
with $\alpha'=\ssp/\stp$, $\kappa=1/\ld$, $z_v=z_r+2z_b$, and $d_{r,v}=\sqrt{r^2+z_{r,v}^2}$.
A slab of finite thickness $T$ has two boundaries; satisfying both requires an
infinite train of image dipoles, the multipole~\cite{donner2005},
\begin{equation}
  R_d(r;T)=\frac{\alpha'}{4\pi}\sum_{i=-\infty}^{\infty}\!\Big[
  z_{r,i}(\kappa d_{r,i}{+}1)\frac{e^{-\kappa d_{r,i}}}{d_{r,i}^3}
  - z_{v,i}(\kappa d_{v,i}{+}1)\frac{e^{-\kappa d_{v,i}}}{d_{v,i}^3}\Big],
  \label{eq:multipole}
\end{equation}
with $z_{r,i}=2i(T{+}2z_b)+z_r$ and $z_{v,i}=2i(T{+}2z_b)-z_r-2z_b$. As
$T\!\to\!\infty$ the $i{=}0$ term recovers \eqref{eq:dipole}; the remaining terms
are exactly the flux the dipole keeps but a slab of thickness $T$ loses through
its back face. Equation~\eqref{eq:multipole} is the \emph{geometry-aware}
diffusion answer on the one geometry that admits a closed form, and the estimator
of Section~\ref{sec:method} reproduces it.

\section{The slab error rate is governed by optical thickness}
\label{sec:disc}

\begin{figure}[!tbp]
\centering
\includegraphics[width=\linewidth]{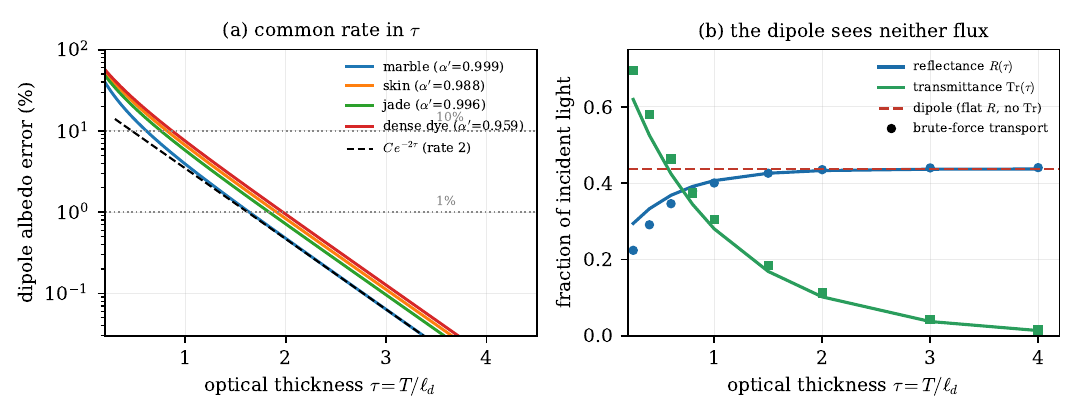}
\caption{\textbf{(a)} The dipole's diffuse-albedo error, measured against the
finite-thickness diffusion solution \eqref{eq:multipole} under the same boundary
condition, on a log scale against $\tau=T/\ld$ for four media. The curves are
near-parallel lines of slope $-2$, tracking the closed-form law $C\,e^{-2\tau}$ of
Proposition~\ref{prop:law} (dashed). Here $\tau$ sets the rate, while the
prefactor varies with material. \textbf{(b)} The slab's two fluxes as fractions of
incidence: the reflectance $R(\tau)$ (Proposition~\ref{prop:law}) and the
companion transmittance $\mathrm{Tr}(\tau)$ (Proposition~\ref{prop:trans}), each
closed form (lines) confirmed by brute-force photon transport (markers). As the
slab thins, reflectance drops and transmittance rises at the single-pass rate;
the dipole predicts a constant reflectance (dashed) and no transmission, so it
misses both.}
\label{fig:discovery}
\end{figure}

\paragraph{Isolating the geometry.}
The dipole \eqref{eq:dipole} and the multipole \eqref{eq:multipole} are the same
diffusion model with the same diffusion constant and the same extrapolated
boundary; they differ only in whether a back face exists. Their difference therefore measures the geometry error alone, since no Fresnel convention, phase function or single-scattering term enters. We quantify
it with the diffuse albedo $R=\int_0^\infty R_d(r)\,2\pi r\,dr$, the fraction of
incident diffuse power re-emitted, and report the relative error
$|R^{\mathrm{dip}}-R^{\mathrm{multi}}(T)|/R^{\mathrm{multi}}(T)$.

\paragraph{A single rate variable.}
Figure~\ref{fig:discovery}a sweeps the thickness for marble, skin, jade and a
dense dye, materials whose reduced albedos span $0.96$ to $0.999$ and whose
diffusion lengths differ by more than an order of magnitude
(Table~\ref{tab:media}). Plotted against $\tau=T/\ld$, the error curves become
near-parallel and share the same asymptotic slope, but they do not have identical
absolute magnitude: at fixed $\tau$ the spread is the material prefactor in
Proposition~\ref{prop:law}. For marble, the dipole over-predicts the albedo by
$\RErrQuarter\%$ at $\tau=0.25$, by $\RErrHalf\%$ at $\tau=0.5$ and by
$\RErrOne\%$ at $\tau=1$; across all four media it falls below $10\%$ only for
$\tau\gtrsim\RTauTenLo$--$\RTauTenHi$ and below $1\%$ only for
$\tau\gtrsim\RTauOneLo$--$\RTauOneHi$. Thus optical thickness determines the
exponential rate and the useful crossover scale, while material controls a
roughly twofold prefactor.

\begin{table}[!tbp]
\centering
\caption{Media used in Figure~\ref{fig:discovery}a and the thickness at which the
dipole's diffuse-albedo error crosses $10\%$ and $1\%$. The crossover thicknesses
differ because $\tau=T/\ld$ sets the rate while the material
prefactor changes the absolute error.}
\label{tab:media}
\small
\begin{tabular}{lcccccc}
\toprule
Medium & $\alpha'$ & $\ell_d$ (mm) & err@$\tau{=}0.25$ & err@$\tau{=}0.5$ & $\tau_{10\%}$ & $\tau_{1\%}$ \\
\midrule
Marble & 0.9992 & 7.78 & 32\% & 14\% & 0.68 & 1.84 \\
Skin & 0.9882 & 4.82 & 45\% & 22\% & 0.87 & 2.09 \\
Jade & 0.9959 & 7.44 & 40\% & 19\% & 0.77 & 1.84 \\
Dense dye & 0.9589 & 0.39 & 48\% & 24\% & 0.98 & 2.09 \\
\bottomrule
\end{tabular}

\end{table}

The curves share the same slope because the deficit follows from the
surface integral of the screened Green's function. We use the following identity
throughout the derivation.

\begin{lemma}
\label{lem:flux}
For a signed image-source depth $z\ne0$ and $d(r,z)=\sqrt{r^2+z^2}$,
\begin{equation}
\int_0^\infty z\,(\kappa d+1)\frac{e^{-\kappa d}}{d^3}\,2\pi r\,dr
=2\pi\,\mathrm{sgn}(z)e^{-\kappa |z|}.
\label{eq:fluxidentity}
\end{equation}
\end{lemma}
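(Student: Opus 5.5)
The plan is to change variables from $r$ to $d=\sqrt{r^2+z^2}$, which turns the integral into an exact derivative. Since $d\,dd = r\,dr$, and $d$ runs from $|z|$ to $\infty$ as $r$ runs from $0$ to $\infty$, the left-hand side of \eqref{eq:fluxidentity} becomes
\[
2\pi z\int_{|z|}^{\infty}(\kappa d+1)\frac{e^{-\kappa d}}{d^{2}}\,dd .
\]
The factor $z$ stays outside, since it does not depend on $r$. This is the only place the sign enters: $z$ appears linearly, while the lower limit depends only on $|z|$.

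The key step is to recognise the integrand as a derivative:
\[
\frac{d}{dd}\!\left(-\frac{e^{-\kappa d}}{d}\right)=\frac{\kappa e^{-\kappa d}}{d}+\frac{e^{-\kappa d}}{d^2}=(\kappa d+1)\frac{e^{-\kappa d}}{d^2}.
\]
The boundary term at infinity vanishes for $\kappa\ge 0$, because $e^{-\kappa d}/d\to0$. Evaluating at the lower limit $d=|z|$ then gives
\[
2\pi z\,\frac{e^{-\kappa|z|}}{|z|}=2\pi\,\mathrm{sgn}(z)\,e^{-\kappa|z|}.
\]

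I would add one remark to justify the manipulation. The integrand is nonnegative after the sign of $z$ is factored out, and it is integrable: near $d=|z|>0$ it is bounded, and at infinity it decays at least like $d^{-2}$. So the substitution is legitimate by monotone convergence, even in the limiting case $\kappa=0$, where the result reduces to $2\pi\,\mathrm{sgn}(z)$.

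The assumption $z\neq0$ is exactly what keeps the lower limit strictly positive. If $z=0$, the expression degenerates, since $d=r$ and the integrand behaves like $r^{-2}$ near $r=0$.

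There is essentially no obstacle here. The only point needing care is spotting the antiderivative $-e^{-\kappa d}/d$, which is the radial screened Green's function (up to normalisation). A useful cross-check is the physical reading: the expression is the normal flux of a point source at signed distance $z$ through a plane, integrated over the whole plane. By Gauss's law with screening, that total flux is the solid-angle fraction times the attenuation $e^{-\kappa|z|}$, and its sign is set by which side of the plane the source lies on.
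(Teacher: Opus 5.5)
Your proposal is correct and follows the paper's proof essentially step for step: substitute $r\,dr=d\,dd$ with $d\in[|z|,\infty)$, recognise $(\kappa d+1)e^{-\kappa d}/d^2=-\frac{d}{dd}\bigl(e^{-\kappa d}/d\bigr)$, evaluate the integral to $e^{-\kappa|z|}/|z|$, and multiply by $2\pi z$. The integrability remark is a harmless addition the paper omits; the closing Gauss's-law reading is only a heuristic, since the screened field is not divergence-free and the flux balance needs the volume term $-\kappa^2\int G$, but the proof does not depend on it.
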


\begin{proof}
With $d=\sqrt{r^2+z^2}$, $r\,dr=d\,dd$ and
\[
(\kappa d+1)\frac{e^{-\kappa d}}{d^2}
=-\frac{d}{dd}\left(\frac{e^{-\kappa d}}{d}\right).
\]
The integral from $d=|z|$ to $\infty$ is therefore $e^{-\kappa |z|}/|z|$,
and multiplication by $2\pi z$ gives \eqref{eq:fluxidentity}.
\end{proof}

\begin{proposition}
\label{prop:law}
The geometry-aware diffuse albedo of a slab of thickness $T$ is
\begin{equation}
R(T)=\frac{\alpha'}{2}\,
\frac{(a-q/a)+(b-q/b)}{1-q},\quad
a=e^{-\kappa z_r},\; b=e^{-\kappa(z_r+2z_b)},\; q=e^{-2\kappa(T+2z_b)},
\label{eq:Rslab}
\end{equation}
with the dipole as its thick limit $R_\infty=R(\infty)=\tfrac{\alpha'}{2}(a+b)$. The
deficit of the finite slab relative to that limit is therefore
\begin{equation}
1-\frac{R(T)}{R_\infty}=C\,e^{-2\tau}+\mathcal{O}(e^{-4\tau}),\qquad
C=\frac{(1/a+1/b)-(a+b)}{a+b}\,e^{-4\kappa z_b},
\label{eq:errlaw}
\end{equation}
where $\tau=\kappa T=T/\ld$.
\end{proposition}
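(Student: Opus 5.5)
The plan is to integrate the multipole series \eqref{eq:multipole} term by term against $2\pi r\,dr$ and evaluate every term with Lemma~\ref{lem:flux}. This turns the albedo into a pair of geometric series in $q$, which we then sum in closed form. Write $L=T+2z_b$, so that $q=e^{-2\kappa L}$ and $z_{r,i}=2iL+z_r$, $z_{v,i}=2iL-z_r-2z_b$. By Lemma~\ref{lem:flux}, the $i$-th bracket contributes
\[
2\pi\big[\mathrm{sgn}(z_{r,i})e^{-\kappa|z_{r,i}|}-\mathrm{sgn}(z_{v,i})e^{-\kappa|z_{v,i}|}\big].
\]
With the prefactor $\alpha'/(4\pi)$, this gives $R(T)=\tfrac{\alpha'}{2}\sum_i(\cdots)$.

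The key step is to track the signs of the image depths, which requires $2L>z_r+2z_b$. This holds for any $T>0$ because $z_r<2L$ trivially once $T\ge0$; I will check it explicitly.
\begin{itemize}
\item For $i=0$: $z_{r,0}=z_r>0$ and $z_{v,0}=-(z_r+2z_b)<0$, so the term is $a+b$. This is the dipole, and $R_\infty=\tfrac{\alpha'}{2}(a+b)$ follows as the $q\to0$ limit.
\item For $i\ge1$: both depths are positive, and the term is $q^i a-q^i e^{\kappa(z_r+2z_b)}=q^i(a-1/b)$.
\item For $i=-j\le-1$: both depths are negative, with $|z_{r,-j}|=2jL-z_r$ and $|z_{v,-j}|=2jL+z_r+2z_b$, so the term is $-q^j/a+q^j b$.
\end{itemize}
Summing over $j\ge1$ gives
\[
\sum_{j\ge1} q^j\big[(a+b)-(1/a+1/b)\big]=\frac{q}{1-q}\big[(a+b)-(1/a+1/b)\big].
\]
Adding $a+b$ and putting everything over $1-q$ yields exactly \eqref{eq:Rslab}.

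For the deficit, I will form
\[
1-\frac{R}{R_\infty}=\frac{q\,\big[(1/a+1/b)-(a+b)\big]}{(a+b)(1-q)}.
\]
Substituting $q=e^{-2\tau}e^{-4\kappa z_b}$ and expanding $1/(1-q)=1+q+\dots$ gives $Ce^{-2\tau}+\mathcal O(q^2)$, and $q^2=\mathcal O(e^{-4\tau})$, which is \eqref{eq:errlaw}.

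I expect the main obstacle to be justifying the interchange of $\sum_i$ and $\int_0^\infty dr$. Each term $z(\kappa d+1)e^{-\kappa d}/d^3$ has a fixed sign in $r$, and its integral is bounded by $2\pi e^{-\kappa|z|}$. Since $|z_{r,i}|,|z_{v,i}|\ge 2|i|L-z_r-2z_b$, the absolute integrals are summable geometrically with ratio $q<1$. Tonelli/Fubini on the absolute values then licenses the termwise evaluation. I also need to handle the sign bookkeeping carefully, in particular the edge case where an image depth could vanish; this is excluded because $L>z_b+z_r/2$ keeps all $z\ne0$ as Lemma~\ref{lem:flux} requires.
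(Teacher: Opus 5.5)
Your proposal is correct and is essentially the paper's proof. Both apply Lemma~\ref{lem:flux} termwise to the image train and sum geometric series in $q$; you differ only in grouping the $\pm i$ terms instead of summing the real and virtual trains separately, and in adding the Fubini and sign bookkeeping that the paper leaves implicit. One slip in that bookkeeping: $2L>z_r+2z_b$ does not follow from $z_r<2L$; it holds because $2L\ge 4z_b$ and $2z_b=\tfrac{4A}{3}z_r>z_r$ (using $z_b=2AD$, $D=z_r/3$, $A\ge1$).
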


\begin{proof}
Let $L=T+2z_b$. Applying Lemma~\ref{lem:flux} to the real image depths
$z_{r,i}=2iL+z_r$ gives
\[
\sum_i \mathrm{sgn}(z_{r,i})e^{-\kappa |z_{r,i}|}
=\sum_{i=0}^{\infty}a q^i-\sum_{i=1}^{\infty}\frac{q^i}{a}
=\frac{a-q/a}{1-q}.
\]
The virtual images enter \eqref{eq:multipole} with the opposite sign. Since
$z_{v,i}=2iL-z_r-2z_b$, their contribution is
\[
-\sum_i \mathrm{sgn}(z_{v,i})e^{-\kappa |z_{v,i}|}
=\sum_{i=0}^{\infty}b q^i-\sum_{i=1}^{\infty}\frac{q^i}{b}
=\frac{b-q/b}{1-q}.
\]
Multiplying the sum by $\alpha'/2$ yields \eqref{eq:Rslab}. Subtracting from
$R_\infty$ gives
\[
1-\frac{R(T)}{R_\infty}
=\frac{q}{1-q}\,
\frac{(1/a+1/b)-(a+b)}{a+b},
\]
and using $q=e^{-2\tau}e^{-4\kappa z_b}$ gives the expansion
\eqref{eq:errlaw}.
\end{proof}

The relative error reported in Section~\ref{sec:disc} normalizes by the truth
$R(T)$ rather than by $R_\infty$; it equals $R_\infty/R(T)-1$, which shares the
leading term $C\,e^{-2\tau}$ and the rate but differs from \eqref{eq:errlaw} at
$\mathcal{O}(e^{-4\tau})$, a difference that is appreciable only in the thinnest
regime ($\tau\lesssim0.5$). The asymptotic rate is the same under either
normalization; the precise percentages are those of the reported quantity. The
error decays at the rate $2$ for every medium; the material enters only
through the prefactor $C$, and through it only via $\kappa z_r=\sqrt{3(1-\alpha')}$
and $\kappa z_b$. The thickness at which the error reaches a level $\varepsilon$ is
thus $\tau_\varepsilon\approx\tfrac12\ln(C/\varepsilon)$, logarithmic in $C$:
across the four media here $C$ spans only $0.26$ to $0.51$, so the one-percent
crossover moves over the narrow range $\tau\approx1.6$ to $2.0$. The
near-parallel family in Figure~\ref{fig:discovery}a reflects a
universal rate together with a logarithmically weak material
prefactor, and \eqref{eq:Rslab} supplies in closed form the correction
$\kappa(\tau)=R(\tau\ld)/R_\infty$ used in Section~\ref{sec:cert}.

\paragraph{Brute-force transport confirms the law.}
The closed-form comparison above still compares two diffusion solutions.
Figure~\ref{fig:discovery}b therefore adds a brute-force volumetric
path tracer (Section~\ref{sec:method}) that uses true Fresnel boundaries and the
Henyey--Greenstein phase function and makes no diffusion approximation. In the
thick limit its reflectance, $\RRbruteThick$, matches the analytic dipole,
$\RRdipole$, to $\RThickAgree\%$, an independent check that both the model and
our conventions are sound. As the slab thins, real transport sheds light through
the back face exactly as the geometry-aware diffusion predicts: at
$\tau=\RTauThin$ it re-emits $\RBruteThinDrop\%$ less than the thick limit, so the
dipole, which predicts no change, over-states the reflectance by
$\RDipOverThin\%$. Fitting the brute-force reflectance deficit
$1-R_{\mathrm{brute}}(\tau)/R_{\mathrm{brute}}(\infty)$ over
$0.6\le\tau\le3$ gives rate $\RReflRateBrute$, against $\RReflRateClosed$ for the
closed form and the predicted round-trip rate $2$; the closed-form deficit
matches the transport deficit to $\RReflAgree\%$ in that band. Because it compares fitted decay rates rather than flux magnitudes, this check tests the exponent of the headline reflectance law directly. Figure~\ref{fig:profiles} decomposes the brute-force profile
by the number of scattering events. The diffusion model
describes \emph{multiple} scattering, and it matches the brute-force
multiple-scattering component (markers) across the profile. It omits \emph{single} scattering, light that scatters exactly once before leaving, a sharp near-field term. This single-scattering term is $\RSingleTotal\%$ of all re-emitted light overall, $\RSingleNear\%$ of it within one diffusion length of the entry point, and essentially absent ($\RSingleFar\%$) beyond three. This near-field peak is absent from a diffusion-only profile, and we add it back in the renderer (Section~\ref{sec:single}).

\begin{figure}[!tbp]
\centering
\includegraphics[width=\linewidth]{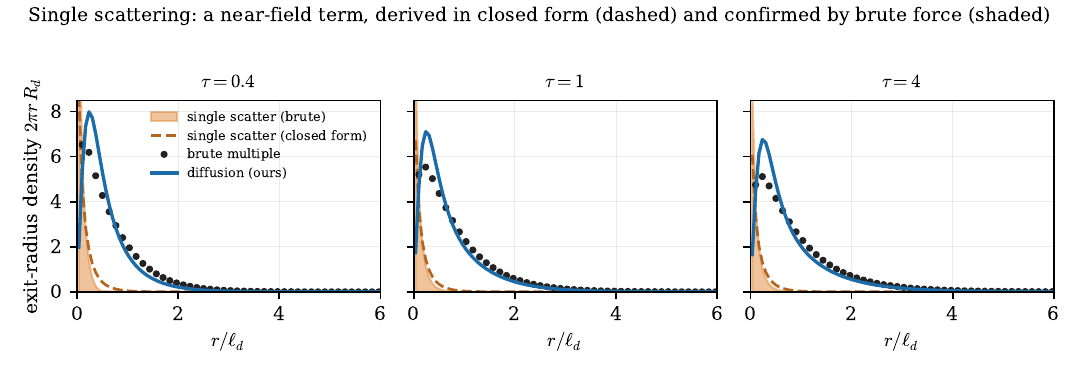}
\caption{Brute-force exit-radius density $2\pi r R_d(r)$ decomposed by scattering
order, for three thicknesses. The diffusion estimator (ours, line) matches the
brute-force \emph{multiple}-scattering component (markers) in the mid- and
far-field; the \emph{single}-scattering term is a sharp near-field peak the
diffusion misses ($\RSingleNear\%$ of re-emission within one diffusion length,
$\RSingleFar\%$ beyond three), which the closed form \eqref{eq:single} (dashed)
reproduces over the brute-force measurement (shaded). The renderer adds it by a
ray-march (Section~\ref{sec:single}).}
\label{fig:profiles}
\end{figure}

\paragraph{Transmittance.}
The light a thin slab fails to reflect is transmitted through it, and the same
image series gives the flux through the far face in closed form. This term
accounts for the back-lit appearance of thin features (Figure~\ref{fig:hero}).

\begin{proposition}
\label{prop:trans}
The diffuse transmittance of a slab of thickness $T$ is
\begin{equation}
\Tr(T)=\frac{\alpha'}{2}\,e^{-\kappa T}\,
\frac{(1/a-b)+p\,(1/b-a)}{1-q},\qquad p=e^{-4\kappa z_b},
\label{eq:trans}
\end{equation}
with $a,b,q$ as in Proposition~\ref{prop:law}. Its leading behaviour is
$\Tr\sim e^{-\tau}$.
\end{proposition}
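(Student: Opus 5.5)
The plan mirrors the proof of Proposition~\ref{prop:law}: write the transmitted profile as an image sum, integrate each image term with Lemma~\ref{lem:flux}, and sum the resulting geometric series. For the far face I would use the multipole transmittance profile of Donner and Jensen, obtained from the same image train \eqref{eq:multipole}. With the outward normal pointing in $+z$, it reads
\[
T_d(r;T)=\frac{\alpha'}{4\pi}\sum_{i}\Big[s_{r,i}(\kappa d_{r,i}{+}1)\frac{e^{-\kappa d_{r,i}}}{d_{r,i}^3}-s_{v,i}(\kappa d_{v,i}{+}1)\frac{e^{-\kappa d_{v,i}}}{d_{v,i}^3}\Big].
\]
Here $s_{r,i}=T-z_{r,i}$ and $s_{v,i}=T-z_{v,i}$ are the signed depths of the images measured from the plane $z=T$, and the distances $d$ are taken to points on that plane. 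By Lemma~\ref{lem:flux}, $\int T_d\,2\pi r\,dr=\tfrac{\alpha'}{2}\sum_i\big[\mathrm{sgn}(s_{r,i})e^{-\kappa|s_{r,i}|}-\mathrm{sgn}(s_{v,i})e^{-\kappa|s_{v,i}|}\big]$. What remains is to sum these two series.

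\textbf{Real images.} Set $L=T+2z_b$, so that $2L-T=T+4z_b$, and recall $p=e^{-4\kappa z_b}$ and $e^{\kappa T}q=e^{-\kappa T}p$. For $i\le0$ we have $s_{r,i}=T-z_r+2|i|L>0$, and the terms are $e^{-\kappa T}a^{-1}q^{|i|}$. For $i\ge1$ we have $s_{r,i}<0$, and the terms are $-e^{-\kappa(2iL+z_r-T)}=-e^{-\kappa T}\,a\,p\,q^{i-1}$. The real-image sum is therefore $e^{-\kappa T}(1/a-ap)/(1-q)$.

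\textbf{Virtual images.} Here $s_{v,i}=T+z_r+2z_b-2iL$. For $i\le0$ the terms are positive and equal $e^{-\kappa T}b\,q^{|i|}$. For $i\ge1$ they are negative, and the $i=1$ term has magnitude $e^{-\kappa(T+2z_b-z_r)}=e^{-\kappa T}p/b$. With the overall minus sign, the virtual contribution is $-e^{-\kappa T}(b-p/b)/(1-q)$.

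Adding the two contributions and multiplying by $\alpha'/2$ gives exactly \eqref{eq:trans}. For the leading behaviour, note that $q=O(e^{-2\tau})$, so
\[
\Tr=\tfrac{\alpha'}{2}\big[(1/a-b)+p(1/b-a)\big]e^{-\tau}\big(1+O(e^{-2\tau})\big).
\]
The bracketed prefactor is positive because $a,b<1$, which gives $\Tr\sim e^{-\tau}$.

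The step I expect to be the main obstacle is bookkeeping rather than analysis. First, the transmitted kernel must be justified as the same Lemma~\ref{lem:flux} kernel with the depth re-signed relative to the far face, with the same $\alpha'/4\pi$ normalization and the orientation taken along the outward normal there. Second, the sign flip of $s_{r,i}$ and $s_{v,i}$ must be tracked precisely, occurring between $i=0$ and $i=1$ given $0<z_r<T$. I would first check this against the reflectance case: reflecting the geometry about the slab midplane maps the front-face computation onto the back-face one. As a sanity check, $\Tr+R$ should approach the total $\alpha'$-weighted escape as $\sa\to0$.
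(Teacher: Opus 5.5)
Your proof is correct and follows the paper's route: you apply Lemma~\ref{lem:flux} to the same image train with signed depths measured from $z=T$ and sum the resulting geometric series. Your check that the prefactor of $e^{-\tau}$ is positive is a useful addition that the paper leaves implicit. One bookkeeping difference is worth knowing: your per-train sums, $e^{-\kappa T}(1/a-ap)/(1-q)$ for the real images and $-e^{-\kappa T}(b-p/b)/(1-q)$ for the virtual ones, are the literal real/virtual split, whereas the paper's split into $(1/a-b)$ and $p(1/b-a)$ actually groups the $i\le0$ terms of both trains against the $i\ge1$ terms; the totals agree, so \eqref{eq:trans} is unaffected.
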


\begin{proof}
The flux through the far face is obtained by applying
Lemma~\ref{lem:flux} to the same image train, but with each signed depth measured
from the plane $z=T$ rather than from the entrance plane. The real images give
$e^{-\kappa T}(1/a-b)/(1-q)$, while the virtual images give the reflected train
$e^{-\kappa T}p(1/b-a)/(1-q)$, with $p=e^{-4\kappa z_b}$. Multiplying by the
same $\alpha'/2$ prefactor yields \eqref{eq:trans}. The numerator and denominator
are independent of $T$ except through $q=\mathcal{O}(e^{-2\tau})$, so the leading
term is proportional to $e^{-\tau}$.
\end{proof}

The two rates admit a direct interpretation. Transmittance corresponds to a
single pass through the slab and decays as $e^{-\tau}$, whereas the reflectance
error of Proposition~\ref{prop:law} corresponds to a round trip and decays as
$e^{-2\tau}$. Both quantities follow from the same image series, so a single
diffusion operator accounts for the dipole's error and for the transmission that
produces back-lit appearance. The brute-force photon tracer confirms both closed
forms directly (Figure~\ref{fig:discovery}b): its bottom-face flux tracks
$\Tr(\tau)$ to $\RTrAgree\%$ for $\tau\gtrsim0.5$, with a fitted decay
rate of $\RTrRateBrute$, the single-pass rate $1$ the derivation predicts, while
the dipole transmits nothing at all. Below $\tau\!\approx\!0.5$ the slab transmits
ballistically, faster than diffusion predicts, the same regime in which the
diffusion approximation itself loosens~\cite{sawhney2025sota}.

\subsection{What the over-glow is, and where it generalizes}
\label{sec:optical}

The surface glow is the light gathered from everywhere inside the object. The
dipole gathers it as though the object were a flat half-space, so its error equals
the light the real object loses through a nearby second surface, which is absent from the
half-space model,
\begin{equation}
e(\mathbf{x})=\int_\Omega\big[G_\Omega(\mathbf{x},\mathbf{y})-G_H(\mathbf{x},\mathbf{y})\big]
\,q(\mathbf{y})\,d\mathbf{y},
\label{eq:greendiff}
\end{equation}
the difference between the response of the true object $G_\Omega$ and of a
half-space $G_H$. The slab multipole is the special case where this difference is
the train of mirror images across the far face.

The random-walk representation quantifies that loss. For the
operator $\nabla^2-c$ in \eqref{eq:screened}, the Feynman--Kac formula writes the
survival contribution to a boundary set $\Gamma$ as
\begin{equation}
u(\mathbf{x})=
\E_{\mathbf{x}}\!\left[
\exp\!\left(-\int_0^{\tau_\Gamma} c(B_s)\,ds\right)
\mathbf{1}_{\tau_\Gamma<\tau_{\partial\Omega\setminus\Gamma}}
\right],
\label{eq:fk}
\end{equation}
up to the conventional time scaling of Brownian motion~\cite{oksendal2003}. The
half-space dipole retains paths that, in the real object, hit a far boundary and
leave. In a homogeneous slab, let $u(z)$ be the probability weight that such a
walk, started at depth $z$ with the front surface reflecting, reaches the far face
at depth $T$ before absorption. It satisfies $u''-c\,u=0$, $u'(0)=0$, $u(T)=1$,
so
\begin{equation}
u(0)=\operatorname{sech}\tau,\qquad \tau=\sqrt{c}\,T=T/\ld .
\label{eq:sech}
\end{equation}
One crossing fades as $e^{-\tau}$, the back-lit transmittance of
Proposition~\ref{prop:trans}; the round trip $u(0)^2\sim e^{-2\tau}$ is the
reflectance error of Proposition~\ref{prop:law}. The random walk and the image series yield the same two rates, and the closed form matches the simulated walk to $\ROptSech$ (Figure~\ref{fig:optical}a).

This view extends to media the image series cannot treat. When the material varies inside the object there is no single diffusion length, but the exponential part of \eqref{eq:fk} is governed by the least costly path to the far boundary. The
corresponding optical distance is
\begin{equation}
\tau_\star(\mathbf{x})=\inf_{\gamma:\,\mathbf{x}\to\Gamma_{\mathrm{far}}}
\int_\gamma\sqrt{c(\mathbf{s})}\,\,ds ,
\label{eq:agmon}
\end{equation}
the Agmon distance of the screened operator~\cite{agmon1982}. At leading
exponential order, $u(\mathbf{x})$ scales like $\exp[-\tau_\star(\mathbf{x})]$,
so the reflected loss scales like $\exp[-2\tau_\star(\mathbf{x})]$. Writing the survival function as $u=Ae^{-S}$ in
$\nabla^2u-c(\mathbf{x})u=0$ gives the leading eikonal equation
\begin{equation}
|\nabla S(\mathbf{x})|^2=c(\mathbf{x}),\qquad S|_{\Gamma_{\mathrm{far}}}=0,
\label{eq:eikonal}
\end{equation}
with $S=\tau_\star$ to leading order. The amplitude $A$ is fixed by the next order of the expansion, the transport equation $2\,\nabla S\cdot\nabla A + A\,\nabla^2 S = 0$, and $A$ carries the curvature, source distribution and boundary factors. The optical-distance
collapse holds because $S$ alone sets the exponent while $A$ varies slowly and
folds into the per-material prefactor. This separation requires the medium to
change little over a diffusion length, $\ld\,\lvert\nabla\ln c\rvert\ll1$. Where
$c$ varies on the scale of $\ld$ itself, $\nabla A$ is no longer subdominant and
the leading exponential stops predicting the magnitude. The slab law is the
constant-coefficient, straight-geodesic case of this optical-distance law. We
check it on slabs whose absorption varies smoothly with depth (a deepening
medium, a skin-like falloff, a buried layer): across all of them $-\ln u(0)$
lines up with $\tau_\star$ at correlation $\ROptCorr$ and unit slope, against
$\ROptCorrGeom$ for the straight-line thickness (Figure~\ref{fig:optical}b).
These media vary slowly by construction. High-frequency media that violate
$\ld\,\lvert\nabla\ln c\rvert\ll1$, and a varying diffusion constant that changes
the metric and adds interface terms, lie outside what we test.

\begin{figure}[!tbp]
\centering
\includegraphics[width=\linewidth]{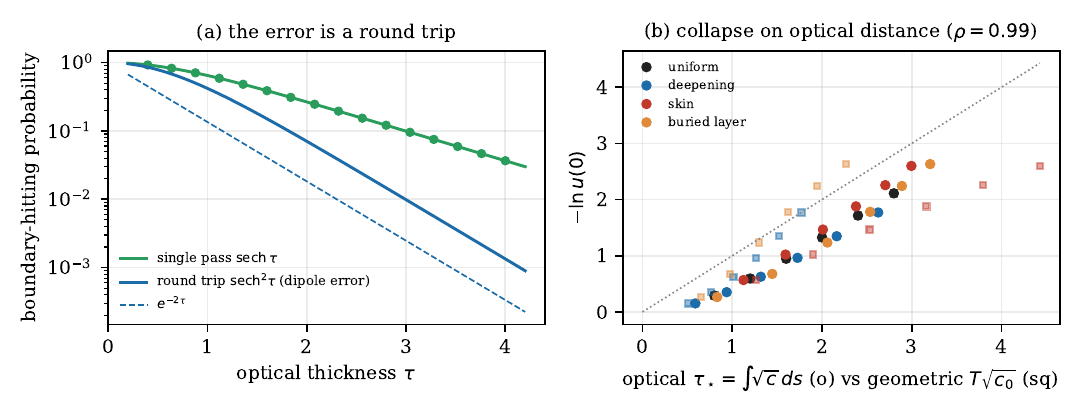}
\caption{The dipole error as a boundary-hitting probability. \textbf{(a)} The
probability that a killed diffusing path reaches the far face, $u(0)$ (markers),
equals $\operatorname{sech}\tau$ \eqref{eq:sech} (green); its square, the round
trip, is the reflectance error and approaches $e^{-2\tau}$ (blue). \textbf{(b)}
For depth-varying screening $c(z)$, $-\ln u(0)$ collapses onto the optical distance
$\tau_\star=\int\!\sqrt{c}\,ds$ \eqref{eq:agmon} (circles, one line) but not onto
the geometric thickness $T\sqrt{c_0}$ (squares, scattered).}
\label{fig:optical}
\end{figure}

\section{Predicting the error before rendering}
\label{sec:cert}
The slab law suggests a practical predictor, but the predictor is not a theorem
for arbitrary geometry. The thickness beneath a surface point is the shape
diameter function~\cite{shapira2008}, the length of a ray cast into the object
opposite the normal; dividing by the diffusion length gives a local
$\tau(\mathbf{x})$, and the closed-form curve gives the expected albedo loss for
a locally slab-like feature. The quantity this predictor truly needs is the
optical distance $\tau_\star$ of Section~\ref{sec:optical}; the shape diameter
over $\ld$ is its homogeneous, nearly flat approximation. In a medium whose
coefficients vary the optical-distance predictor follows the error, whereas the
geometric one does not (Figure~\ref{fig:optical}b). A screened-Poisson walk
estimates this distance, through its survival to the far boundary, as readily as
it estimates the field, so the predictor and the render are computed the same way on the same geometry.

Figure~\ref{fig:cert} compares the predictor against the measured error. We
render a biconvex lens, thick at the centre and thin at the rim, with the dipole
and with the grid-free estimator of the next section, taking the latter as the
geometry-exact reference. The dipole's error is concentrated in a ring at the rim
and is near zero across the body; the predictor reproduces this pattern and
ranks the pixels by error at a Spearman correlation of $\RCertRho$. It predicts
rather than certifies, and it is illumination-agnostic while the true
error is source-weighted and, on strongly curved geometry, curvature-dependent
(Section~\ref{sec:disc2}); we use it to rank where to spend effort, not to
guarantee a tolerance. A stress test on \RCertStressShapes\ analytic SDFs makes
this limitation explicit: the mean rank correlation drops to
$\RCertStressRho$ when curvature or nearly constant thickness dominates, and the
same thickness correction reduces mean error only from $\RCertStressDipErr\%$ to
$\RCertStressCorrErr\%$. Thus the per-pixel ranking works only as a thin-feature heuristic; where
thickness barely varies, the residual is set by curvature and illumination,
which it does not see. The estimator-selection
experiment in Section~\ref{sec:groundtruth} therefore uses only a coarser regime
variable, whether a surface is back-lit, instead of treating the thickness
ranking as a bound. Within the thin-feature regime it was derived for, the
predictor still supports two uses.

The first use needs no additional solver. The closed-form curve that predicts the
error also prescribes its correction. The geometry-aware albedo equals the
dipole's multiplied by $\kappa(\tau)=R(\tau\ld)/R_\infty$
(Proposition~\ref{prop:law}), which decreases from one in the thick limit toward
zero as a feature thins. Scaling the dipole by $\kappa(\tau(\mathbf{x}))$, a
per-pixel thickness lookup, yields a thickness-corrected dipole, and because the
slab albedo ratio depends on $\tau$ a single correction curve applies across
materials up to the prefactor already discussed. On this lens the correction
reduces the mean error from $\RCertDipErr\%$ to
$\RCertCorrErr\%$ and the ninety-fifth percentile from $\RCertDipWC\%$ to
$\RCertCorrWC\%$, without any additional rendering. It corrects the magnitude but not the full
spatial profile, so a residual remains where the geometry is most pronounced.

For that residual the predictor dispatches the grid-free solver. Rendering the
flagged pixels with it and the rest with the corrected dipole, and sweeping the
tolerance, traces the cost--accuracy curve of Figure~\ref{fig:cert}c: grid-free
on the $\RCertOpFrac\%$ of pixels flagged at a $\RCertOpTol\%$ tolerance reaches
$\RCertOpErr\%$, and the curve continues to grid-free accuracy as the budget
grows. The predictor lets one trade cost for accuracy along a single curve, which
a profile-based method does not offer; it ranks where to spend effort, and
because it is a correlation it does not guarantee a tolerance.

\begin{figure}[!tbp]
\centering
\includegraphics[width=\linewidth]{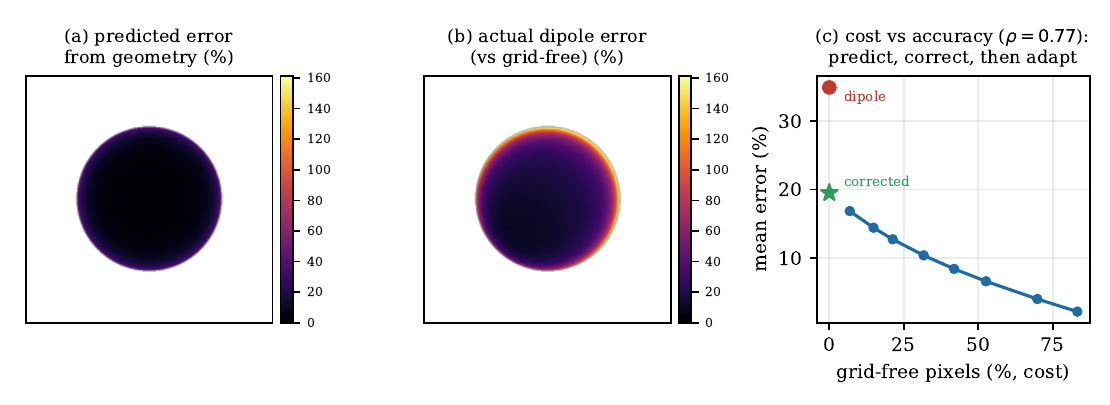}
\caption{Predicting, correcting and dispatching, from the local thickness alone.
\textbf{(a)} the dipole error predicted from the shape diameter function and the
master curve, with nothing rendered; \textbf{(b)} the dipole's actual error
against the grid-free reference, which it matches (Spearman $\RCertRho$).
\textbf{(c)} the same curve gives a thickness correction that halves the
error (green), and the predictor then dispatches the grid-free solver to the
flagged pixels, tracing a cost--accuracy curve down to grid-free accuracy.}
\label{fig:cert}
\end{figure}

\section{A grid-free, geometry-exact estimator}
\label{sec:method}
The finding above is actionable only if the geometry-aware diffusion can be
evaluated on shapes without a closed form. We do this without discretizing the
interior, by solving \eqref{eq:screened} pointwise with Walk on
Spheres~\cite{sawhney2020}.

\paragraph{Walk on Spheres for screened Poisson.}
The solution of \eqref{eq:screened} at a point $\mathbf{x}$ satisfies the
integral identity, over the largest ball $B(\mathbf{x},R)$ that fits inside the
domain,
\begin{equation}
  \phi(\mathbf{x}) = \bar P(R)\,\mathbb{E}_{\partial B}[\phi]
  + \int_{B} G^{c}_R(\mathbf{x},\mathbf{y})\,f(\mathbf{y})\,d\mathbf{y},
  \label{eq:wos}
\end{equation}
where $\bar P(R)=\kappa R/\sinh(\kappa R)$ is the (sub-unit) survival weight of
the modified Helmholtz Poisson kernel, the next point is sampled uniformly on the
sphere $\partial B$, and the ball Green's function is
$G^c_R(t)=\sinh(\kappa(R-t))/(4\pi t\,\sinh(\kappa R))$ with $t=\|\mathbf{y}-\mathbf{x}\|$
(reducing to $(R-t)/(4\pi tR)$ as $c\!\to\!0$). A single estimator of
\eqref{eq:wos} draws one interior point uniformly in the ball for the source
term, multiplies the carried weight by $\bar P(R)$, jumps to the sphere, and
terminates when it reaches an $\varepsilon$-shell of the boundary, where it
collects the (here homogeneous) boundary value. The screening makes $\bar P<1$,
so the walk terminates with probability one even in unbounded directions.

\paragraph{Geometry from a distance field.}
The only thing \eqref{eq:wos} needs from the geometry is the empty-ball radius
$R$, i.e.\ the distance to the boundary. We represent the object by a
signed-distance function $\mathrm{sdf}(\mathbf{x})$ (negative inside) and place
the extrapolated boundary at the level set $\mathrm{sdf}=z_b$, so that
$R(\mathbf{x})=z_b-\mathrm{sdf}(\mathbf{x})$. Because the distance field reports
the nearest surface in \emph{any} direction, the walk automatically respects a
nearby back face or a tight curvature: there is no half-space assumption, and the
estimate is geometry-exact for whatever shape the distance field describes. The
offset boundary $\mathrm{sdf}=z_b$ is a clean shell only where the extrapolation
length stays below the reach of the surface; on features thinner than $2z_b$ the
two offset faces meet near the medial axis, and the diffusion model, like the
dipole, no longer has a well-separated second boundary to resolve. The
source is $f(\mathbf{x})=3\ssp\stp\,E_{\mathrm{ri}}(\mathbf{x})$, the reduced
intensity of the light having entered through the surface and decayed as
$e^{-\stp\ell}$, with $\ell$ the distance back to the entry point found by sphere
tracing along the light direction.

\paragraph{Rendering.}
A camera ray is sphere-traced to the surface; at the hit we evaluate
$\phi$ just inside with $N$ independent walks and convert it to exitant radiance
through $L_o=(1-F_{dr})\,\phi/(2\pi A)$, with one set of coefficients per colour
channel. The same routine produces the dipole's prediction when the distance
field is replaced by the tangent half-space at the hit, $\mathrm{sdf}(\mathbf{x})
=\langle\mathbf{x}-\mathbf{x}_o,\mathbf{n}\rangle$. The estimator, sample count
and source are identical, so the \emph{only} difference between ``ours'' and
``dipole'' in Figure~\ref{fig:compare} is whether the geometry is the true shape or its local flat tangent.

\paragraph{Brute-force reference.}
For ground truth we trace photons analogically: a beam refracts through the
surface (Fresnel), takes exponential free flights, scatters by the
Henyey--Greenstein phase function, is absorbed with probability $1-\sgs/(\sa+\sgs)$
per collision, and reflects or refracts at the boundary with the Fresnel
probability, until it exits or dies. This makes no diffusion approximation and is
the reference used in Section~\ref{sec:disc} and Figure~\ref{fig:discovery}b. By
counting each photon's scattering events the same tracer separates single from
multiple scattering, and produces Figure~\ref{fig:profiles}.

\subsection{Single scattering}
\label{sec:single}
The diffusion term is, by construction, a model of multiple scattering, and
Figure~\ref{fig:profiles} confirms it captures that component. The missing piece
is single scattering, light that refracts in, scatters once and leaves,
which dominates the near field ($\RSingleNear\%$ of re-emission within a diffusion
length) and gives thin, back-lit features their sharp directional appearance. Its reach
is set by the reduced mean free path: a singly scattered photon travels of order
$1/\stp$ before it leaves, which in diffusion lengths is $\sqrt{3(1-\alpha')}$ and
vanishes as the albedo approaches one. Single scattering is therefore a near-field term, confined within a diffusion length and negligible beyond a few. The brute-force single-scatter exits have a mean radius of $0.13\,\ld$, well inside that scale.

Single scattering comes from a different mechanism than the diffusion image series. It is ballistic rather than multiply scattered, and it has its own closed form. A photon
refracting in along the normal, scattering once at depth $z$ and leaving at radial
distance $r$ after an interior path $d=\sqrt{r^2+z^2}$, gives the single-scatter
profile
\begin{equation}
R^{(1)}(r)=\int_0^\infty \sgs\,e^{-\sigma_t z}\,
p\!\left(\tfrac{z}{d}\right)e^{-\sigma_t d}\,\frac{z}{d^{3}}\,dz,
\label{eq:single}
\end{equation}
with $p$ the Henyey--Greenstein phase function and the true (not reduced)
coefficients. Equation~\eqref{eq:single} matches the brute-force single-scatter
profile to a shape correlation of $\RSingleCorr$ (Figure~\ref{fig:profiles},
dashed), and its mass sits at the reduced mean free path, confirming the scale
above. We add it with a short ray-march rather than a walk. At a visible point we
refract the view ray into the medium and step along it; at each step we connect to
the light, attenuating by the transmittance in and out and weighting by the
Henyey--Greenstein phase function and the boundary Fresnel terms, and we add an
isotropic term for the ambient environment. Using the true (not reduced)
coefficients, this single-bounce integral complements the diffusion's multiple
scattering, so the renderer reproduces both the near-field peak and the diffuse
body.

\subsection{Validation, convergence and throughput}
\label{sec:valid}

\begin{figure}[!tbp]
\centering
\includegraphics[width=\linewidth]{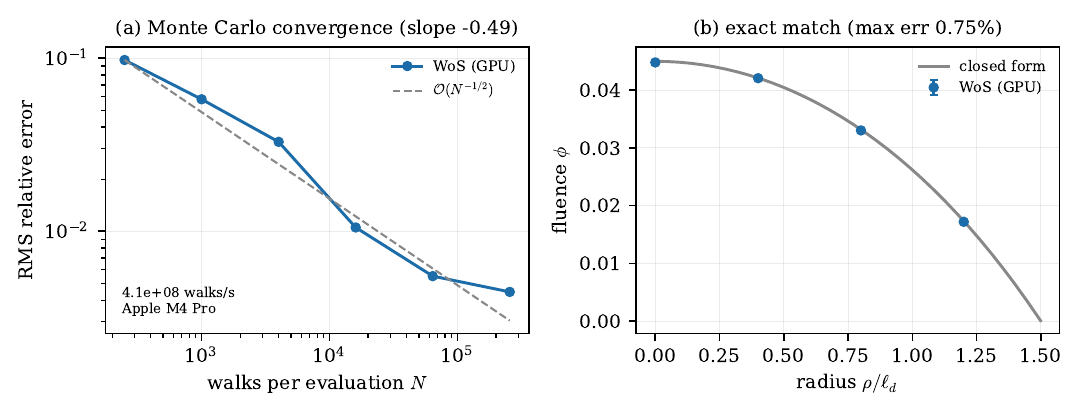}
\caption{\textbf{(a)} RMS relative error of the GPU estimator against the
closed-form ball solution as a function of walks per evaluation; the slope is
$\RConvSlope$, the $\mathcal{O}(N^{-1/2})$ rate of Monte Carlo. The residual
floor at large $N$ is the $\mathcal{O}(\varepsilon)$ bias of the boundary shell.
\textbf{(b)} The estimator (points, $3\sigma$ error bars) lands on the closed-form
fluence inside a screened-Poisson ball, with a maximum error of $\RMaxRelErr\%$.}
\label{fig:validation}
\end{figure}

We verify the estimator against two closed-form problems. For the screened
Poisson equation with a uniform source in a ball, the exact fluence is
$\phi(\rho)=(f_0/c)\,[\,1-(L/\rho)\sinh(\kappa\rho)/\sinh(\kappa L)\,]$;
Figure~\ref{fig:validation}b shows the estimator reproducing it to
$\RMaxRelErr\%$. For a finite slab with an exponential depth source, the
planar geometry where the dipole fails, the three-dimensional walk reproduces
the one-dimensional closed form to under $1\%$. The error falls as
$\mathcal{O}(N^{-1/2})$ (Figure~\ref{fig:validation}a, slope $\RConvSlope$), as
expected for an unbiased Monte Carlo estimator, until it reaches the small
$\mathcal{O}(\varepsilon)$ bias floor of the termination shell. The CPU reference
and the GPU agree to $\RCpuGpuAgree\%$, so the finding does not depend on the GPU.
On an \RDevice\ the plain-float Metal kernel sustains $\RThroughput\times10^{8}$
ball-walks per second.

\subsection{Comparison to a path-traced reference}
\label{sec:groundtruth}

\begin{figure}[!tbp]
\centering
\includegraphics[width=\linewidth]{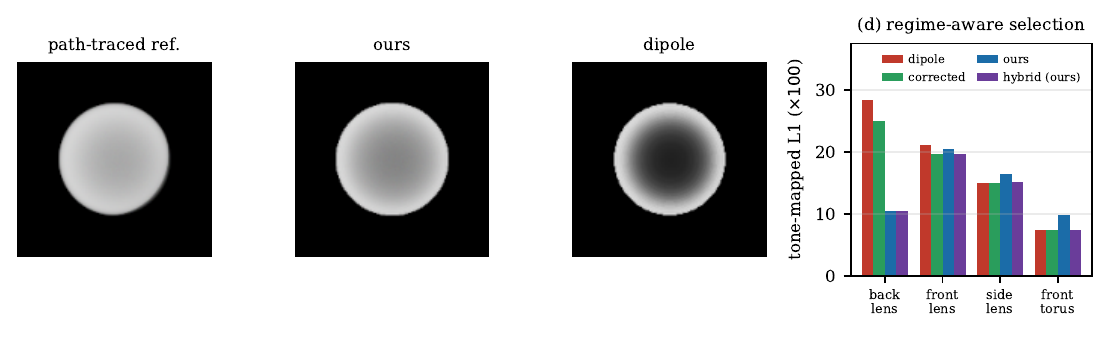}
\caption{Validation against an independent path-traced reference (backward
volumetric path tracing with next-event estimation and angular Fresnel
boundaries, $\RGtSpp$k samples and $\RGtBounce$ bounces per pixel, no diffusion
approximation). The first three panels show the back-lit lens, where appearance
is governed by transmission through local thickness and the half-space profile
leaves the centre too dark. \textbf{(d)} A four-case benchmark (tone-mapped L1)
compares the dipole, the closed-form thickness-corrected dipole, the grid-free
estimator, and a regime-aware hybrid that routes grid-free to back-lit pixels and
the corrected profile elsewhere. No fixed method wins every case; in this
benchmark the hybrid has the lowest mean tone-mapped error.}
\label{fig:groundtruth}
\end{figure}

The comparisons so far measure the dipole against analytic profiles and against
our own diffusion estimator. We close the loop with an independent reference: a
backward volumetric path tracer with next-event estimation and true angular
Fresnel boundaries, which makes no diffusion approximation, validated against the
brute-force tally of Section~\ref{sec:disc}. We render \RGtCases\ cases
(back-lit, front-lit and side-lit lenses, plus a front-lit torus), take the path
tracer ($\RGtSpp$k samples per pixel) as the reference, and compare three
diffusion estimators after exposure matching: the dipole, the same dipole scaled
by $\kappa(\tau)$, and the grid-free solution (Figure~\ref{fig:groundtruth}).

The improvement is narrower than the single-image comparison suggested. On the
back-lit lens, where transmission through thickness is the dominant effect, the
grid-free estimator reduces mean raw error from $\RGtErrDip\%$ to
$\RGtErrOurs\%$; the corrected profile reaches $\RGtErrCorr\%$. Across all
\RGtCases\ cases, however, the mean raw errors are $\RGtMeanDip\%$ (dipole),
$\RGtMeanCorr\%$ (corrected) and $\RGtMeanOurs\%$ (grid-free). On a tone-mapped
RGB L1 metric the ordering is clearer, with means $\RGtToneDip$,
$\RGtToneCorr$ and $\RGtToneOurs$, respectively, but the raw errors remain
large. By raw error, the win counts are \RGtWinOurs/\RGtCases\ for grid-free,
\RGtWinCorr/\RGtCases\ for the corrected profile, and
\RGtWinDip/\RGtCases\ for the dipole.

No single estimator wins everywhere. The corrected profile is accurate and nearly free in cost where a radial profile applies, namely thin, convex, front- or side-lit, homogeneous material. The grid-free solver is worthwhile in the regimes a profile cannot express, of which transmission is the one this benchmark isolates.
Transmission is concentrated where the surface faces away from the light
($\mathbf{n}\cdot\mathbf{l}<0$), and there no profile carries a thickness term at
all while the dipole over-predicts by up to $\RGtThickX\times$ in the thickest
interior. The same structural gap drives the other two settings of this paper,
heterogeneous media (Section~\ref{sec:hetero}) and the inverse problem
(Section~\ref{sec:inverse}), neither of which a single $R_d(r)$ can represent.
Pairing a diffusion model with a costlier solver and choosing between them per
region is standard production practice~\cite{chiang2016,werner2024}; here the
choice is made per pixel from $\mathbf{n}\cdot\mathbf{l}$ with no reference,
routing the grid-free solver to back-lit pixels and the corrected profile
elsewhere. That applies the solver only where no profile suffices and gives the
lowest mean error, tone-mapped L1 $\RGtToneHybrid$ against $\RGtToneOurs$
(grid-free), $\RGtToneCorr$ (corrected) and $\RGtToneDip$ (dipole). The reference
takes $\RGtPtSec$\,s against $\RGtOursSec$\,s for the grid-free image at the same
resolution in the back-lit case.

\section{Results}
\label{sec:results}

The figures are backed by the same result files that generate the in-text
numbers, and no claim rests on a single image. The slab law
uses \RNMedia\ media over forty thicknesses; the transport check uses ten slab
thicknesses and 120 exit-radius bins; solver validation uses a closed-form ball
over six sample counts; the predictor is tested on a lens and on
\RCertStressShapes\ analytic stress shapes; the path-traced comparison covers
\RGtCases\ lighting/geometry cases at $160^2$ pixels and \RGtSpp k reference
samples per pixel; the variable-medium test solves a $64^2$ slice with a
\RHetCRatio$\times$ coefficient ratio; the inverse problem uses \RInvSrc\
sources and \RInvDet\ probes; and the geometry sweep covers \RGswShapes\ shapes
and 89,472 surface samples. The experiments isolate the mechanisms studied in
the paper and are not meant as a benchmark of a production renderer. Every plotted
number, fit, and macro is regenerated from the scripts shipped with the
ancillary files.

\begin{figure}[!tbp]
\centering
\includegraphics[width=\linewidth]{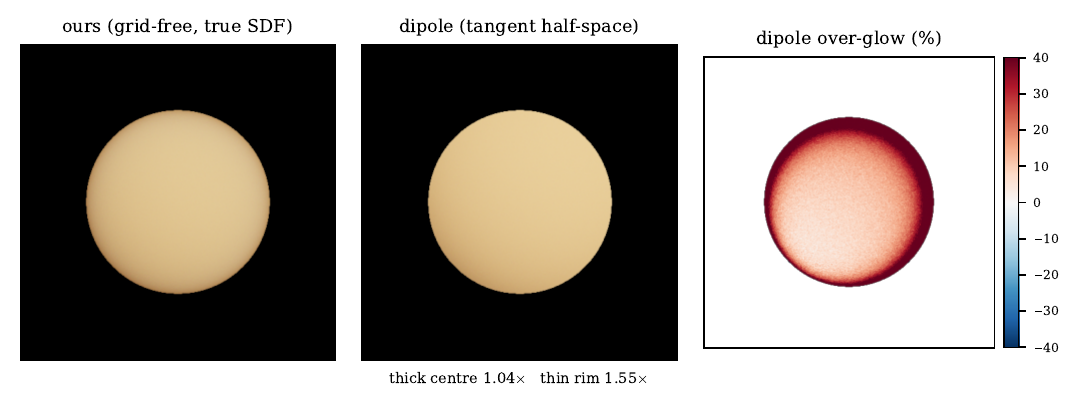}
\caption{A front-lit biconvex lens, thick at the centre, thin at the rim,
rendered with the grid-free estimator (left) and with the tangent-half-space
dipole (middle) at equal sample count and exposure. The signed difference (right)
shows the dipole's error is spatially organised by thickness, consistent with
Section~\ref{sec:disc} predicts: it agrees at the thick centre
($\RCenterRatio\times$) and over-glows the thin rim by $\RRimOverGlowPct\%$
($\RRimOverGlow\times$), because the half-space assumption keeps back-scattered
light that the real thin rim lets escape.}
\label{fig:compare}
\end{figure}

\paragraph{A rendered example.}
Figure~\ref{fig:compare} renders a biconvex lens, which is thick at the centre
and thin at the rim, with the two methods at the same sample count and exposure.
The difference map is the spatially-resolved version of Figure~\ref{fig:discovery}:
near zero at the thick centre and a bright ring of over-glow at the thin rim,
where the dipole is $\RRimOverGlow\times$ too bright. A practitioner shading a
thin-featured asset with the dipole therefore sees a structured brightening of every thin part, which are the parts a
viewer perceives as translucent.

\paragraph{Appearance under environment lighting.}
A translucent object has more than a subsurface term. It also has a dielectric
surface that reflects its surroundings. We wrap the grid-free estimator in a
complete appearance model. The surface is a microfacet
dielectric~\cite{walter2007}. The view ray reflects a glossy, roughness-filtered
sample of a procedural studio environment, weighted by the Schlick--Fresnel
factor, so polished stone shows a sharp highlight and a bright reflective rim.
The complementary transmitted fraction drives the subsurface term, which is now
lit by the same environment rather than a single source. Its distant irradiance
enters the screened-Poisson source alongside the key light, so the unlit side
glows softly with the environment's colour. This matches the distant-lighting,
heterogeneous setting that NeuPreSS~\cite{neupress2024} addresses by training a
neural network and that ReSTIR subsurface scattering~\cite{werner2024} addresses
by path tracing. The grid-free solver produces it directly, with no
precomputation and no training. Figure~\ref{fig:hero} (a back-lit jade ring) and
Figure~\ref{fig:gallery} (marble, jade, amber and coral) are produced by this
model, changing only the distance field and the per-channel coefficients.

\begin{figure}[!tbp]
\centering
\includegraphics[width=0.66\linewidth]{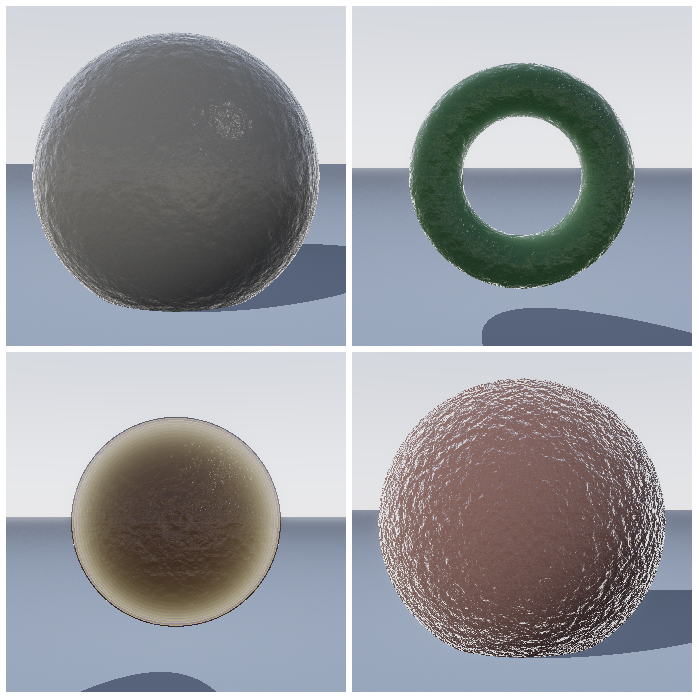}
\caption{Translucent materials under environment lighting, rendered by the
grid-free subsurface estimator beneath a microfacet dielectric coat, changing
only the signed-distance function and the per-channel optical coefficients: a
front-lit marble sphere, a glossy jade ring, a back-lit amber lens and a softer
coral sphere.}
\label{fig:gallery}
\end{figure}

The renders were produced on an \RDevice; none of the reported quantities
(relative error, convergence rate, the $\tau$-dependent slab rate) depend on the hardware,
and the CPU reference reproduces them. The Metal kernels compile offline to a
\texttt{metallib} when a full toolchain is present and at runtime otherwise, so
the same source runs across toolchain versions and Apple GPUs.

\section{Spatially varying media}
\label{sec:hetero}
A reflectance profile $R_d(r)$ describes one homogeneous material and is
therefore restricted to spatially uniform coefficients. The grid-free estimator
instead solves the screened-Poisson problem with the coefficients defined in the
volume. When the screening varies in space the equation becomes
$\nabla^2\phi - c(\mathbf{x})\,\phi = -f(\mathbf{x})$, and we estimate it with a
null-scattering walk in the spirit of Sawhney et al.~\cite{sawhney2022}: the walk
runs with a constant majorant $\bar c \ge c(\mathbf{x})$ and the deficit
$\bar c - c(\mathbf{x})$ is treated as an extra source proportional to $\phi$,
sampled in the same ball. A per-ball majorant, raised only where a ball can reach
the denser region, keeps the deficit, and its variance, local. On a
manufactured solution with $c$ varying $\RHetCRatio\times$ the estimator is exact
to $\RHetErr\%$. For a piecewise inclusion, where the absorption changes but the
scattering (hence the diffusion constant) does not, the fluence is continuous
across the interface and an even simpler interface-limited walk applies.

Figure~\ref{fig:hetero} shows a translucent disk lit from one
side, solved with and without a denser inclusion. The inclusion casts a clear
shadow in the internal light field (the fluence behind it drops by
$\RShadowDrop\%$) and re-shapes the glow around it. A single homogeneous profile
has no coefficient field on which to place that inclusion; the grid-free solver
uses the same estimator as in the homogeneous case, with extra walks concentrated
near the inclusion.

\begin{figure}[!tbp]
\centering
\includegraphics[width=\linewidth]{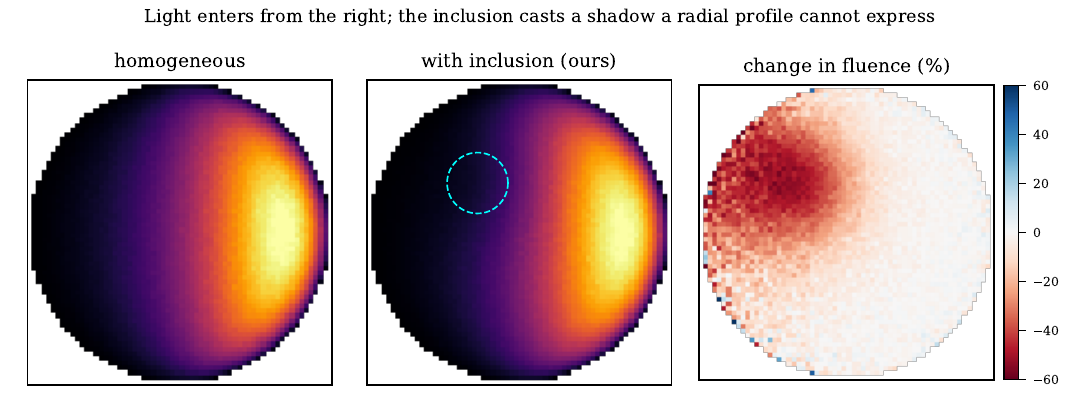}
\caption{Spatially varying media, which a single $R_d(r)$ cannot represent.
Internal diffuse fluence of a translucent disk lit from the right, computed
grid-free, without (left) and with (middle) a denser inclusion (dashed circle).
The inclusion casts a shadow that drops the fluence behind it by $\RShadowDrop\%$
(right); the bulk is unchanged. Solved with the null-scattering
variable-coefficient estimator validated against a manufactured solution.}
\label{fig:hetero}
\end{figure}

\section{Recovering hidden structure}
\label{sec:inverse}
Because the forward problem is a diffusion solve, the corresponding inverse problem is
also accessible, and it is a familiar one: recovering an absorbing inclusion from
diffuse boundary or interior measurements is diffuse optical
tomography~\cite{arridge1999}, here posed grid-free on a signed-distance domain. A
profile-based model cannot pose it at all, having no representation of internal
structure. Our estimator solves the coefficient-to-measurement map and is
differentiable~\cite{miller2024diff,yilmazer2024}, so the recovery experiment is
reported together with the sensitivity and resolution limits that determine when
it is identifiable.

\paragraph{Recovery.}
We fix the walk seed so the Monte Carlo forward model is a deterministic, smooth
function of the inclusion parameters (common random numbers). From internal
fluence probes under $\RInvSrc$ illuminations, with $2\%$ measurement noise, we
recover the centre and contrast of a Gaussian inclusion (a
$\RInvContrastTrue\times$ rise in absorption) by Adam descent on the data loss
from a neutral start at the object's centre (Figure~\ref{fig:inverse}a). The
recovered centre lands within $\RInvCentreErr\%$ of the object's radius and the
contrast within $\RInvContrastErr\%$.

\paragraph{Sensitivity and resolution.}
What can be recovered is governed by the Frechet derivative of the
measurement~\cite{arridge1999}. Linearizing the screened-Poisson equation about a
contrast perturbation $\delta c$ gives the Born kernel
\begin{equation}
\delta m(\mathbf{x}_d)=-\!\int_\Omega G(\mathbf{x}_d,\mathbf{x})\,\phi(\mathbf{x})\,
\delta c(\mathbf{x})\,d\mathbf{x},
\label{eq:born}
\end{equation}
the product of the forward field $\phi$ and the detector Green's function $G$,
both of which the screened-Poisson walks evaluate grid-free. Both decay as
$e^{-\kappa r}$, so the signal of an
inclusion at depth $\delta$ below the measurement surface falls as
$e^{-\kappa\delta}$. We measure exactly this: the perturbation a fixed inclusion
makes in a near-boundary ring decays at $\RInvDecay$ per diffusion length
(Figure~\ref{fig:inverse}b), the single-pass rate \eqref{eq:born} predicts.
Against a noise floor this sets the resolution: an inclusion is recoverable only
while its signal exceeds the noise, i.e. to a depth
$\delta_{\max}\approx\kappa^{-1}\ln(S_0/\eta)$, which is $\RInvResTwo\,\ld$ at
$2\%$ noise and $\RInvResFive\,\ld$ at $5\%$ (Figure~\ref{fig:inverse}b). The
recovery above succeeds because its interior probes sit close to the inclusion,
where the signal is strong. The boundary-only limit is the harder, realistic one,
and the diffusion sets it, not the solver. The forward map is strongly
smoothing, so recovering a low-dimensional inclusion is well posed while
recovering arbitrary internal structure from boundary data is the ill-posed
problem the resolution bound describes. Differentiability enables the low-dimensional recovery but not the general one. The grid-free formulation adds one thing: the forward
field and the Green's function that determine this sensitivity are evaluated on
the true geometry by the same walks, with no mesh.

\begin{figure}[!tbp]
\centering
\includegraphics[width=\linewidth]{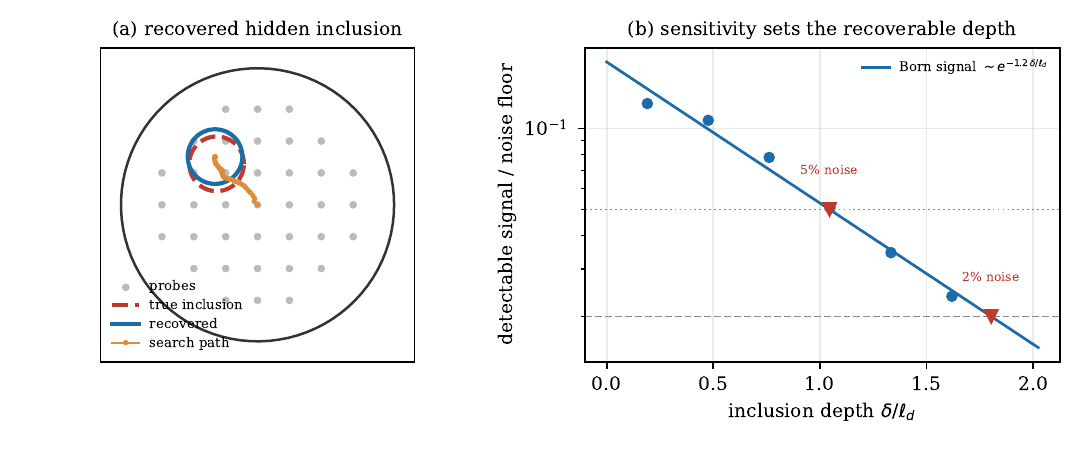}
\caption{Inverse subsurface scattering as grid-free diffuse optical tomography.
\textbf{(a)} From internal diffuse measurements of a translucent disk under
$\RInvSrc$ illuminations, the search path (orange) walks the estimate from a
neutral start onto the true inclusion (dashed); the recovered circle (blue) lands
within $\RInvCentreErr\%$ of the object radius. \textbf{(b)} The Born signal of a
fixed inclusion (circles) decays with depth at the single-pass rate
\eqref{eq:born} (line); where it meets a measurement-noise floor (grey)
sets the recoverable depth (red markers). A radial reflectance profile cannot
represent, let alone recover, this structure.}
\label{fig:inverse}
\end{figure}

\section{Discussion and limitations}
\label{sec:disc2}
The estimator inherits the diffusion approximation it solves. Figure~\ref{fig:discovery}b
shows the geometry-aware diffusion and real transport diverging slightly for
$\tau\lesssim0.4$: when the object is much thinner than a diffusion length, low
orders of scattering dominate and no diffusion model (dipole, multipole or
ours) is exact. Our contribution is to remove the \emph{geometry} error within
the diffusion regime, not to replace diffusion by transport; for the very thin or
optically thin regime a brute-force or higher-order method remains necessary, and
the same distance-field walk supports the brute-force tracer we use here. The two
errors should be kept apart: against the back-lit path-traced reference of
Section~\ref{sec:groundtruth} the remaining $\RGtErrOurs\%$ raw error is
dominated by this \emph{model} error, the gap between diffusion and transport,
since the \emph{solver} error against the closed-form ball is only
$\RMaxRelErr\%$ (Figure~\ref{fig:validation}). Across the \RGtCases-case
benchmark the grid-free estimator has mean raw error $\RGtMeanOurs\%$, close to
the corrected profile at $\RGtMeanCorr\%$, and it does not win every
case. ``Geometry-exact'' refers to the screened-Poisson problem, not to
radiative transfer.

\paragraph{Thickness sets the exponent; curvature enters the prefactor.}
Proposition~\ref{prop:law} is a slab theorem: thickness is the only geometric
degree of freedom there. It does not follow that the pointwise error on a curved,
non-convex shape is governed by the local thickness alone. To measure the extra
geometric dependence we rendered six shapes, two spheres, a fat and a thin torus, a lens and
a ring-and-bead, under uniform illumination (which removes directional shading and
leaves the geometry error), and regressed the log of the dipole's per-point error
against the local thickness and against the local thickness plus the mean
curvature $\ld H(\mathbf{x})$ measured from the distance-field Laplacian
(Figure~\ref{fig:geomsweep}). Thickness alone explains only $R^2=\RGswFitTau$ of
the variance; adding $\ld H$ raises it to $\RGswFitCurv$, with a positive
coefficient $\RGswCoefCurv$. At a fixed $\tau$ the error climbs with curvature: a
strongly curved patch is much worse than a flat one of the same thickness. The
exponent $e^{-2\tau}$ is the slab law and survives, but the prefactor is not
constant; it carries the curvature, so on general geometry the error behaves as
$C(\ld H)\,e^{-2\tau}$ rather than $C\,e^{-2\tau}$. Thickness governs the thin,
nearly flat features the predictor is built for, where $\ld H$ is small.

This form follows from the operator, not just the regression. Near a boundary of mean curvature $H$, written in
normal coordinates with depth $z$, the diffusion operator is
$\partial_z^2 - 2H\,\partial_z - c$. This is the flat half-space operator
$\partial_z^2 - c$ that the dipole solves, plus a first-order drift
$-2H\,\partial_z$. Treating that drift as a perturbation makes the
Green's-function discrepancy \eqref{eq:greendiff}, and with it the prefactor,
depend on curvature at first order through the single dimensionless group
$\ld H$, $\,C(\ld H)=C_0\,(1 + c_1\,\ld H + O((\ld H)^2))$, with $\ld$ the only
length available to make the curvature dimensionless. The dipole's error is the
size of this deviation from the flat value, so it grows with $\lvert\ld
H\rvert$, which is what the regression measures. The derivation therefore fixes
the shape of the correction and the variable it depends on instead of assuming
them. What stays open is the coefficient $c_1$ in closed form for a general
boundary, and a bound uniform in $\tau$: the leading curvature integral of the
half-space Green's function is small, while the rendered error samples
curvatures large enough that higher-order terms enter, so a pointwise bound on
\eqref{eq:greendiff} carrying both optical distance and boundary curvature
remains open.

\begin{figure}[!tbp]
\centering
\includegraphics[width=\linewidth]{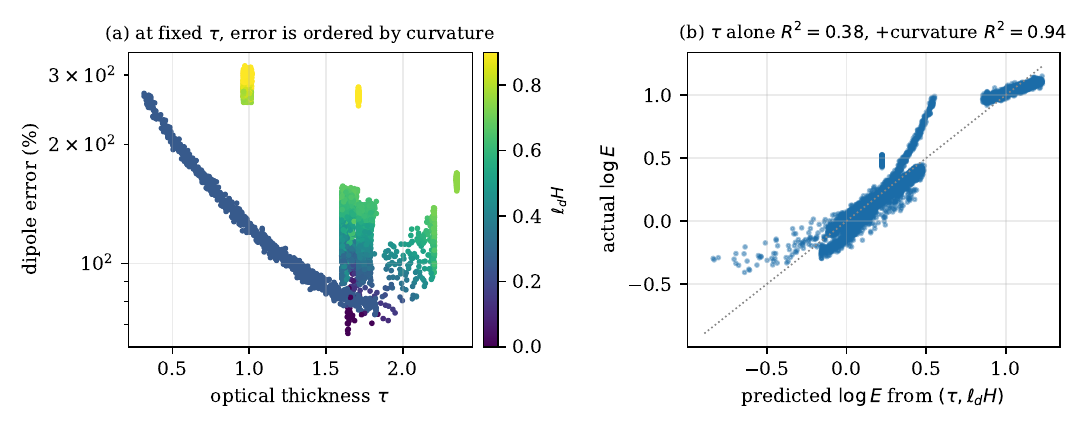}
\caption{Does thickness alone govern the error on general geometry? \textbf{(a)}
The dipole's per-point error against the grid-free reference, over $\RGswShapes$
shapes under uniform illumination, versus optical thickness $\tau$, coloured by
mean curvature $\ld H$. At a fixed $\tau$ the error is ordered by curvature.
\textbf{(b)} Regressing $\log E$ on $\tau$ alone reaches $R^2=\RGswFitTau$; adding
$\ld H$ reaches $\RGswFitCurv$. The exponent is the slab law; the prefactor carries
the curvature.}
\label{fig:geomsweep}
\end{figure}

We adopt the extrapolated-boundary (Dirichlet) condition because it is the one the
dipole assumes, which isolates geometry. It is also the standard linearization of
the physical partial-current (Robin) condition, so it is the textbook choice in
the diffusive regime this paper targets. The exact Robin condition is available
grid-free through Walk on Stars~\cite{sawhney2023} and Walkin'
Robin~\cite{miller2024}; pairing their reflecting walks with the screened source
estimator here is a direct extension that would remove the linearization, which
matters most for high-index or low-albedo media where the partial current
departs furthest from its linearized form.

Two costs bound the reach of the estimator, and we measure both
(Figure~\ref{fig:scope}). Spatially varying media are handled by a null-scattering
majorant~\cite{sawhney2022,sawhney2025sota}, whose variance climbs with the
contrast of the medium. Behind a vein, the relative noise at a fixed budget rises
from $\RHstNoiseMin\%$ at a mild $2\times$ contrast to $\RHstNoiseMax\%$ at
$\RHstRatioMax\times$, growing as $r^{\RHstExpG}$, and the per-ball majorant lowers it
by a constant factor without changing the trend; the $2\times$ manufactured test is
therefore optimistic, and very dense veins remain expensive. The geometry must
also be a true distance. The walk is unbiased for any conservative under-estimate,
but a signed-distance field that is loose by a factor $\alpha$, as voxel and
neural fields often are, lengthens each walk as $\alpha^{-\RHstSdfExp}$, so the
analytic shapes used here are a best case for throughput and detailed production
assets cost more. Single scattering is added separately by the ray-march of
Section~\ref{sec:single}, as is standard.

\begin{figure}[!tbp]
\centering
\includegraphics[width=\linewidth]{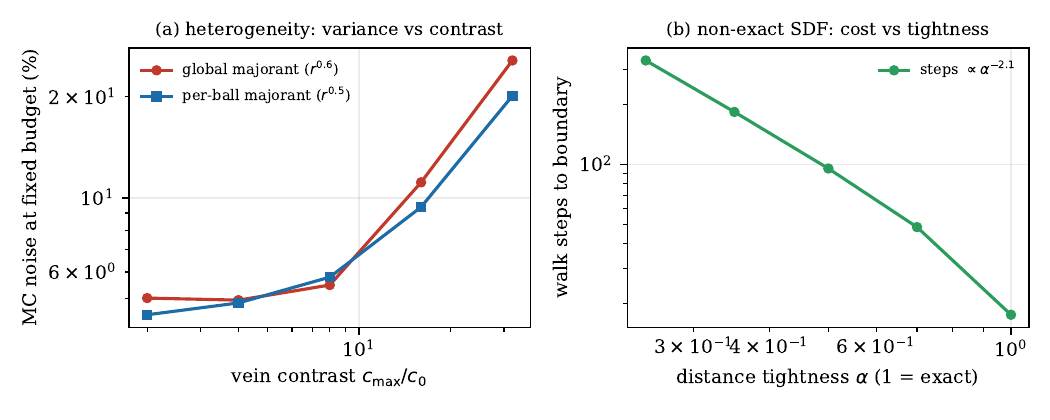}
\caption{Two measured limits of the solver. \textbf{(a)} For a vein of growing
contrast, the Monte Carlo noise at a fixed walk budget rises as
$r^{\RHstExpG}$; the per-ball majorant reduces it by a constant factor. \textbf{(b)}
When the signed-distance field under-estimates distance by a factor $\alpha$ (a
non-exact field), the walk stays unbiased but its step count grows as
$\alpha^{-\RHstSdfExp}$.}
\label{fig:scope}
\end{figure}

The cost is per-pixel Monte Carlo, which suits offline and
interactive-preview rendering and validating faster approximations. Its
variance is reduced by the usual means (more walks, control variates, denoising)
rather than by precomputation. Our evaluation is deliberately mechanism-focused:
it isolates the geometry error against the dipole and a path-traced reference,
rather than benchmarking against the full modern pipeline (normalized
diffusion, separable screen-space, neural and reservoir methods) across many
assets, lightings, and temporal metrics. We now include a small tone-mapped image
metric in Section~\ref{sec:groundtruth}, but not a full perceptual or temporal
study. Establishing where this
estimator is the better choice in a production setting is a separate, empirical
study that the present mathematical results are meant to motivate.

\section{Conclusion}
Subsurface models built on a radial profile carry a hidden assumption of a flat,
bottomless slab. We have shown that for the thin features where it matters most
the error from that assumption is set by the optical thickness
$\tau=T/\ld$, with the dipole over-glowing them by tens of percent until
$\tau\gtrsim\RTauOne$; on strongly curved geometry the boundary
curvature enters the prefactor (Figure~\ref{fig:geomsweep}). The
error is present in radiative transport as well as in the diffusion comparison,
and it is structured across thin translucent features. Solving the diffusion
equation directly inside the object with a Walk-on-Spheres estimator on a
signed-distance function removes the half-space geometry error within that
diffusion model, without meshing the interior, and matches closed forms to
$\RMaxRelErr\%$. The path-traced benchmark also shows the boundary of the claim:
front-lit single scattering, angular transport and curvature can dominate the
remaining image error. The same walks that render the object also give a useful,
but not general, thin-feature predictor and, when differentiated, support
low-dimensional recovery of hidden absorbers. The controlling geometric quantity
in the slab limit is the optical distance light must travel before it reaches a
second surface, not thickness in scene units.

\paragraph{Reproducibility.}
All code (the Metal kernels, the Swift driver and the Python analysis), the
closed-form references, and the scripts that regenerate every figure and number
are included as ancillary files.

\bibliographystyle{plain}
\bibliography{refs}

@inproceedings{jensen2001,
  author    = {Jensen, Henrik Wann and Marschner, Stephen R. and Levoy, Marc and Hanrahan, Pat},
  title     = {A Practical Model for Subsurface Light Transport},
  booktitle = {Proceedings of SIGGRAPH 2001},
  pages     = {511--518},
  year      = {2001},
  publisher = {ACM}
}

@inproceedings{donner2005,
  author    = {Donner, Craig and Jensen, Henrik Wann},
  title     = {Light Diffusion in Multi-Layered Translucent Materials},
  booktitle = {ACM SIGGRAPH 2005 Papers},
  pages     = {1032--1039},
  year      = {2005},
  publisher = {ACM}
}

@article{deon2011,
  author  = {d'Eon, Eugene and Irving, Geoffrey},
  title   = {A Quantized-Diffusion Model for Rendering Translucent Materials},
  journal = {ACM Transactions on Graphics},
  volume  = {30},
  number  = {4},
  pages   = {56:1--56:14},
  year    = {2011}
}

@article{habel2013,
  author  = {Habel, Ralf and Christensen, Per H. and Jarosz, Wojciech},
  title   = {Photon Beam Diffusion: A Hybrid Monte Carlo Method for Subsurface Scattering},
  journal = {Computer Graphics Forum (Proc. EGSR)},
  volume  = {32},
  number  = {4},
  pages   = {27--37},
  year    = {2013}
}

@article{frisvad2014,
  author  = {Frisvad, Jeppe Revall and Hachisuka, Toshiya and Kjeldsen, Thomas Kim},
  title   = {Directional Dipole Model for Subsurface Scattering},
  journal = {ACM Transactions on Graphics},
  volume  = {34},
  number  = {1},
  pages   = {5:1--5:12},
  year    = {2014}
}

@techreport{christensen2015,
  author      = {Christensen, Per H. and Burley, Brent},
  title       = {Approximate Reflectance Profiles for Efficient Subsurface Scattering},
  institution = {Pixar Animation Studios},
  number      = {15-04},
  year        = {2015}
}

@inproceedings{burley2015,
  author    = {Burley, Brent},
  title     = {Extending the Disney {BRDF} to a {BSDF} with Integrated Subsurface Scattering},
  booktitle = {ACM SIGGRAPH 2015 Courses: Physically Based Shading in Theory and Practice},
  year      = {2015},
  publisher = {ACM}
}

@inproceedings{stam1995,
  author    = {Stam, Jos},
  title     = {Multiple Scattering as a Diffusion Process},
  booktitle = {Rendering Techniques (Proc. Eurographics Workshop on Rendering)},
  pages     = {41--50},
  year      = {1995}
}

@article{sawhney2020,
  author  = {Sawhney, Rohan and Crane, Keenan},
  title   = {Monte Carlo Geometry Processing: A Grid-Free Approach to {PDE}-Based Methods on Volumetric Domains},
  journal = {ACM Transactions on Graphics},
  volume  = {39},
  number  = {4},
  pages   = {123:1--123:18},
  year    = {2020}
}

@article{sawhney2022,
  author  = {Sawhney, Rohan and Seyb, Dario and Jarosz, Wojciech and Crane, Keenan},
  title   = {Grid-Free Monte Carlo for {PDE}s with Spatially Varying Coefficients},
  journal = {ACM Transactions on Graphics},
  volume  = {41},
  number  = {4},
  pages   = {53:1--53:17},
  year    = {2022}
}

@article{sawhney2023,
  author  = {Sawhney, Rohan and Miller, Bailey and Gkioulekas, Ioannis and Crane, Keenan},
  title   = {Walk on Stars: A Grid-Free Monte Carlo Method for {PDE}s with Neumann Boundary Conditions},
  journal = {ACM Transactions on Graphics},
  volume  = {42},
  number  = {4},
  pages   = {80:1--80:20},
  year    = {2023}
}

@article{miller2024,
  author  = {Miller, Bailey and Sawhney, Rohan and Crane, Keenan and Gkioulekas, Ioannis},
  title   = {Walkin' Robin: Walk on Stars with Robin Boundary Conditions},
  journal = {ACM Transactions on Graphics},
  volume  = {43},
  number  = {4},
  pages   = {41:1--41:18},
  year    = {2024}
}

@book{ishimaru1978,
  author    = {Ishimaru, Akira},
  title     = {Wave Propagation and Scattering in Random Media},
  publisher = {Academic Press},
  year      = {1978}
}

@article{vicini2019,
  author  = {Vicini, Delio and Koltun, Vladlen and Jakob, Wenzel},
  title   = {A Learned Shape-Adaptive Subsurface Scattering Model},
  journal = {ACM Transactions on Graphics},
  volume  = {38},
  number  = {4},
  pages   = {127:1--127:15},
  year    = {2019}
}

@article{neupress2024,
  author  = {Tg, T. and Frisvad, Jeppe Revall and Ramamoorthi, Ravi and Jensen, Henrik Wann},
  title   = {{NeuPreSS}: Compact Neural Precomputed Subsurface Scattering for Distant Lighting of Heterogeneous Translucent Objects},
  journal = {Computer Graphics Forum},
  volume  = {43},
  number  = {7},
  pages   = {e15234},
  year    = {2024}
}

@article{werner2024,
  author  = {Werner, Mirco and Sch{\"u}{\ss}ler, Vincent and Dachsbacher, Carsten},
  title   = {{ReSTIR} Subsurface Scattering for Real-Time Path Tracing},
  journal = {Proceedings of the ACM on Computer Graphics and Interactive Techniques},
  volume  = {7},
  number  = {3},
  pages   = {1--19},
  year    = {2024}
}

@article{miller2024diff,
  author  = {Miller, Bailey and Sawhney, Rohan and Crane, Keenan and Gkioulekas, Ioannis},
  title   = {Differential Walk on Spheres},
  journal = {ACM Transactions on Graphics},
  volume  = {43},
  number  = {6},
  pages   = {269:1--269:18},
  year    = {2024}
}

@article{yilmazer2024,
  author  = {Yilmazer, Ekrem Fatih and Vicini, Delio and Jakob, Wenzel},
  title   = {Solving Inverse {PDE} Problems using Grid-Free Monte Carlo Estimators},
  journal = {ACM Transactions on Graphics (Proc. SIGGRAPH Asia)},
  volume  = {43},
  number  = {6},
  pages   = {270:1--270:18},
  year    = {2024}
}

@article{shapira2008,
  author  = {Shapira, Lior and Shamir, Ariel and Cohen-Or, Daniel},
  title   = {Consistent Mesh Partitioning and Skeletonisation using the Shape Diameter Function},
  journal = {The Visual Computer},
  volume  = {24},
  number  = {4},
  pages   = {249--259},
  year    = {2008}
}

@inproceedings{walter2007,
  author    = {Walter, Bruce and Marschner, Stephen R. and Li, Hongsong and Torrance, Kenneth E.},
  title     = {Microfacet Models for Refraction through Rough Surfaces},
  booktitle = {Proceedings of the Eurographics Symposium on Rendering},
  pages     = {195--206},
  year      = {2007}
}

@inproceedings{sawhney2025sota,
  author    = {Sawhney, Rohan and Miller, Bailey and Gkioulekas, Ioannis and Crane, Keenan and Jarosz, Wojciech and Zhao, Shuang and Nabizadeh, Mohammad Sina and Li, Zilu},
  title     = {State of the Art in Grid-Free Monte Carlo Methods for Partial Differential Equations},
  booktitle = {ACM SIGGRAPH 2025 Courses},
  year      = {2025},
  doi       = {10.1145/3721241.3734001}
}

@article{muller1956,
  author  = {Muller, Mervin E.},
  title   = {Some Continuous Monte Carlo Methods for the Dirichlet Problem},
  journal = {The Annals of Mathematical Statistics},
  volume  = {27},
  number  = {3},
  pages   = {569--589},
  year    = {1956}
}

@book{agmon1982,
  author    = {Agmon, Shmuel},
  title     = {Lectures on Exponential Decay of Solutions of Second-Order Elliptic Equations: Bounds on Eigenfunctions of {$N$}-Body Schr\"odinger Operators},
  series    = {Mathematical Notes},
  volume    = {29},
  publisher = {Princeton University Press},
  year      = {1982}
}

@book{oksendal2003,
  author    = {{\O}ksendal, Bernt},
  title     = {Stochastic Differential Equations: An Introduction with Applications},
  edition   = {6},
  publisher = {Springer},
  year      = {2003}
}

@article{arridge1999,
  author  = {Arridge, Simon R.},
  title   = {Optical Tomography in Medical Imaging},
  journal = {Inverse Problems},
  volume  = {15},
  number  = {2},
  pages   = {R41--R93},
  year    = {1999}
}

@inproceedings{chiang2016,
  author    = {Chiang, Matt Jen-Yuan and Kutz, Peter and Burley, Brent},
  title     = {Practical and Controllable Subsurface Scattering for Production Path Tracing},
  booktitle = {ACM SIGGRAPH 2016 Talks},
  pages     = {49:1--49:2},
  year      = {2016}
}

@article{jimenez2015separable,
  author  = {Jimenez, Jorge and Zsolnai, K\'aroly and Jarabo, Adri\'an and Freude, Christian and Auzinger, Thomas and Wu, Xian-Chun and von der Pahlen, Javier and Wimmer, Michael and Gutierrez, Diego},
  title   = {Separable Subsurface Scattering},
  journal = {Computer Graphics Forum},
  volume  = {34},
  number  = {6},
  pages   = {188--197},
  year    = {2015}
}

\end{document}